\newtheorem{lemma}{Lemma}
\newtheorem{theorem}{Theorem}
\newtheorem{definition}{Definition}
\newtheorem{assumption}{Assumption}
\theoremstyle{remark}\newtheorem{remark}{Remark}
\title{\LARGE \bf
	Data-driven Self-triggered 
	Control
	 via Trajectory Prediction
}
\author{
		Wenjie~Liu,~Jian~Sun,~\IEEEmembership{Senior Member,~IEEE}, Gang Wang,~\IEEEmembership{Member,~IEEE},
		\\Francesco Bullo,~\IEEEmembership{Fellow,~IEEE},~and
		Jie Chen,~\IEEEmembership{Fellow,~IEEE}
		\thanks{The work was supported in part by the National Key R\&D Program of China under Grant 2021YFB1714800, in part by the National Natural Science Foundation of China under Grants 62088101, 62173034, 61925303, 
			 and in part by the Chongqing Natural Science Foundation under Grant 2021ZX4100027. (\emph{Corresponding author}: Gang Wang.)
		}
		\thanks{
			W. Liu, J. Sun, and G. Wang are with the State Key Lab of Intelligent Control and Decision of Complex Systems and the School of Automation, Beijing Institute of Technology, Beijing 100081, China, and also with the Beijing Institute of Technology Chongqing Innovation Center, Chongqing 401120, China (e-mail: liuwenjie@bit.edu.cn; sunjian@bit.edu.cn;  gangwang@bit.edu.cn).
			
			F. Bullo is with the Mechanical Engineering Department and the Center of Control, Dynamical Systems and Computation, UC Santa Barbara, CA 93106-5070, USA (e-mail: bullo@ucsb.edu).
			
			J. Chen is with the Department of Control Science and Engineering, Tongji University, Shanghai 201804, China, and also with the State Key Lab of Intelligent Control and Decision of Complex Systems, Beijing Institute of Technology, Beijing 100081, China 	
			(e-mail: chenjie@bit.edu.cn).
		}
}
\begin{document}
	\maketitle

	\begin{abstract}
		Self-triggered control, a well-documented technique for reducing the communication overhead while ensuring desired system performance, is gaining increasing popularity. 
		However, existing methods for self-triggered control require explicit system models that are assumed perfectly known \emph{a priori}.
		An end-to-end control paradigm known as data-driven control learns control laws directly from data, and offers a competing alternative to the routine system identification-then-control method. 
		In this context, the present paper puts forth data-driven self-triggered control schemes for unknown linear systems using data collected offline. Specifically, for output feedback control systems, a data-driven model predictive control (MPC) scheme is proposed, which computes a sequence of control inputs while generating a predicted system trajectory.		
		A data-driven self-triggering law is designed using the predicted trajectory, to determine the next triggering time once a new measurement becomes available.  
		For state feedback control systems, instead of capitalizing on MPC
		to predict the trajectory, 
		a data-fitting problem using the pre-collected input-state data is solved, whose solution is employed to construct the self-triggering mechanism. 
		Both feasibility and stability are established for the proposed self-triggered controllers, which are validated using numerical examples.
	\end{abstract}
	\begin{keywords} Data-driven control, data-driven MPC, self-triggered control, predicted control.
	\end{keywords}
	
	\section{Introduction}\label{sec:intro}
Thanks to recent advances in data acquisition and computing technologies,  data-driven control has attracted considerable attention in the past years.  
	Designing control laws directly from data without resorting to any system identification step, offers an appealing alternative to the traditional model-based control paradigm \cite{Hou2013,persis2020data}. This is because in real-world applications, it is always difficult or even impossible to acquire an accurate system model \cite{Astrom1973article,eng2022,LiuG2022}. 
	Indeed, a number of publications are devoted to studying data-driven control. 
	These were mainly inspired by the celebrated Fundamental Lemma proposed in \cite{willems2005note},
which lays the theoretical foundation for 
data-driven control. 
	Several control problems have been addressed under the new framework, including stabilization and optimization in \cite{van2020data,Wildhagen2021data,rueda2020data,Wang2021data,cortes2021dall}, linear quadratic regulation in \cite{depersis2020Low}, robust control in \cite{depersis2021tradeoffs}, 
	quantized control in \cite{zhao2022data}, 
	model predictive control (MPC) in \cite{berberich2019data,Coulson2019data,Coulson2021Bridging,berberich2020cdc,Liu2021data}, and control of complex networks in \cite{baggio2020Data}.

	Yet, the aforementioned works employ periodic transmission protocols, which may be resource-inefficient for real-world systems in terms of processor usage, communication bandwidth, and energy.
	In cyber-physical networked systems \cite{2013Attack,pang2021anovel} 
	 for instance, whose communication network is shared by many devices, the communication bandwidth is always restricted for each device. 
To tackle this issue, a resource-efficient scheduling approach for data transmissions, known as event-based control, has been widely studied in the context of model-based control.
	
	Generally, there are two event-based approaches that have been proven  effective, namely, event-triggered control and self-triggered control \cite{heemels2012introduction}.
	In the former, an event e.g., a data transmission,
	is triggered only  after some designed triggering condition is met. 
	This condition should be tested at each state or output, thus requiring continuous monitoring of the system.
While a level of robustness against uncertainties and unmodeled dynamics can be expected, having the sensor continuously operating results in waste of resources. 
	Related work can be found in \cite{Lv2020Event,chen2020How,antunes2014rollout,Wakaiki2020event,heemels2013model}.
	Self-triggered control, on the other hand, determines the next sampling time and transmission once a sampled measurement is received, which does not need to continuously sample the outputs; 
	see, e.g., \cite{anta2010to,gommans2015resource,sun2019robust,Almeida2014Self}.
	Notably, the sensors in self-triggered control can be completely shut off between sampling times, which saves energy and prolongs service life of the sensor.
	This feature appears promising in the model-based case and motivates the generalization of self-triggered control from model-based to data-driven settings.
	The key idea of traditional self-triggered control is to predict the future trajectory 
	using the system model. It remains unclear how a trajectory can be obtained in the absence of a model.

	The goal of this paper is to design data-driven self-triggered controllers for unknown linear systems using only some data acquired
 off-line.
	The challenge here lies  in how to obtain a predicted trajectory using only  data. To this end, 
		we begin by considering unknown output feedback systems.
In this setup, a new data-driven MPC scheme is proposed, which generates a sequence of optimal control inputs as well as the associated system trajectory.
	Leveraging the predicted trajectory, a data-driven self-triggering mechanism is designed so that the next triggering time can be dictated without using a system model.  	
We further extend this method to state feedback control systems, where the state is sampled to construct the control input based on a state feedback control law.
Again, a norm minimization problem is formulated to predict the system trajectory using the pre-collected input-state data, whose solution enables design of the data-driven self-triggering mechanism. 

	
	In succinct form, the contribution of this work is threefold, summarized as follows.
	\begin{itemize}
		\item[\textbf{c1)}] 
		To predict the future system trajectory based on input-output data, a data-driven MPC scheme accounting for noisy online outputs is developed for output feedback control systems.
		\item[\textbf{c2)}] 
		Leveraging the solution of the data-driven MPC, a data-driven self-triggering mechanism is designed which determines the next transmission time once a new measurement is received, along with its rigorous and comprehensive stability analysis.
		\item[\textbf{c3)}]
		For state feedback control systems, a norm maximization problem is suggested to predict the system trajectory, whose solution is then
		 employed to construct the data-driven self-triggering law.
	\end{itemize}
	
	\emph{Notation:}
	Denote the set of real numbers (natural numbers) by $\mathbb{R}$ ($\mathbb{N}$), and define $\mathbb{N}_0 := \mathbb{N} \cup \{0\}$.
	For a matrix $M$, if it has full column rank (full row rank), its left pseudo-inverse (right pseudo-inverse) is denoted by $M^\dag$ ($M^\ddag$).
	Given a vector $x\in \mathbb{R}^{n_x}$, $\Vert x\Vert$ is its Euclidean norm, $\Vert x\Vert_{\infty}$ is its infinity norm, and for a positive definite matrix $P = P^T \succ 0$, define the weighted norm $\Vert x \Vert_P = \sqrt{x^\top P x}$.
	Let further $\Vert M\Vert$ be the spectral norm of matrix $M$.
	Let $\underline{\lambda}_{P}$ ($\overline{\lambda}_{P}$) represent the minimum (maximum) eigenvalue of matrix $P$.
	Let $[t_1, t_2]$ denote the time interval from $t_1$ to $t_2$ with discrete times. 
	A function $\alpha : [0,\infty) \rightarrow [0, \infty)$ is said to be of class $\mathcal{K}$ if it is continuous, strictly increasing, and $\alpha(0) = 0$.
	A function $\alpha : [0,\infty) \rightarrow [0, \infty)$ is said to be of class $\mathcal{K}_{\infty}$ if it is of class $\mathcal{K}$ and also unbounded.
	A function $\beta : [0,\infty)\times [0,\infty) \rightarrow [0, \infty)$ is said to be of class $\mathcal{KL}$ if $\beta(\cdot, t)$ is of class $\mathcal{K}$ for each fixed $t \ge 0$ and $\beta(s, t)$ decreases to $0$ as $t \rightarrow \infty$ for each fixed $s \ge 0$.
	
	The Hankel matrix associated with the sequence $\{x_t\}_{t = 0}^{N - 1}$, is denoted by
	\begin{equation}
	H_{L}(x):=\left[
	\begin{matrix}
	x_{0} & x_{1} & \ldots & x_{N-L} \\
	x_{1} & x_{2} & \ldots & x_{N-L+1} \\
	\vdots & \vdots & \ddots & \vdots \\
	x_{L-1} & x_{L} & \ldots & x_{N-1}
	\end{matrix}
	\right].
	\end{equation}
	In this paper, we consider experiments of length $N$, so the dependence of $H_L(x)$ on $N$ will be omitted.
	A stacked window of a sequence, say $\{x_t\}_{t=t_1}^{t_2}$ is given by
	\begin{equation}
	x_{[t_1, t_2]}=\left[
	\begin{matrix}
	x_{t_1} \\
	\vdots \\
	x_{t_2}
	\end{matrix}
	\right].
	\end{equation}

	\section{Preliminaries and Problem Formulation}\label{sec:preliminaries}
	\subsection{Networked control systems}\label{sec:preliminaries:ncs}
	Consider the following discrete-time linear system
	\begin{subequations}\label{eq:sys}
		\begin{align}
			x_{t+1} &= A x_{t} + B u_{t}, \quad t \in \mathbb{N}_0\\
			y_{t} &= Cx_t + Du_t
		\end{align}
	\end{subequations}
	where $x_t\! \in\! \mathbb{R}^{n_x}$, $u_t \!\in\! \mathbb{R}^{n_u}$, and $y_t \!\in\! \mathbb{R}^{n_y}$ are the state, control input, and  output, respectively.
	In this paper, we make the following standing assumptions on the system \eqref{eq:sys}. 
	\begin{assumption}[\emph{Controllability and observability}]\label{as:ctrl}
		The pair $(A, B)$ is controllable, and  $(C, A)$ is observable.
	\end{assumption}
	\begin{assumption}[\emph{Unknown system model}]\label{as:sys}
		The system matrices $(A, B, C, D)$ in \eqref{eq:sys} are unknown.
		Instead, some pre-collected input-output data, i.e., $\{u^p_t, y^p_t\}_{t = 0}^{N - 1}$ are available.
	\end{assumption}
	
	Regarding the two assumptions, a remark comes ready. 
	\begin{remark}
		Assumption \ref{as:ctrl} is  standard 
	for stability analysis of linear systems. 
		As a matter of fact, the data-driven self-triggering controllers along with associated theoretical results developed in this paper can be generalized to linear systems which are stablizable and detectable. In real-world setups, it is almost always impossible to have perfect knowledge of the system matrices $(A, B, C, D)$ due to modeling and/or instrumentation inaccuracies. On the other hand, acquiring input-output data pairs $\{u^p_t, y^p_t\}_{t = 0}^{N - 1}$ by exciting the system with some control inputs is often practical and doable.
	\end{remark}
	
	We further recall the definition of observability index \cite{OreillyJ}. 
	
	\begin{definition}[\emph{Observability index}]\label{def:eta}
		The observability index of linear systems as in \eqref{eq:sys} is defined to be the smallest integer $\eta \in \{1, \cdots, n_x\}$ such that the observability matrix $\Theta$ has full rank $n_x$, i.e.,
		\begin{equation}\label{eq:Theta}
			{\rm rank}(\Theta) \stackrel{\triangle}{=} {\rm rank}\big([C^\top\, (CA)^\top\, \cdots \,(CA^{\eta - 1})^\top]^\top\big) = n_x.
		\end{equation}
	\end{definition}

	\begin{figure}
	\centering
	\includegraphics[width=8.5cm]{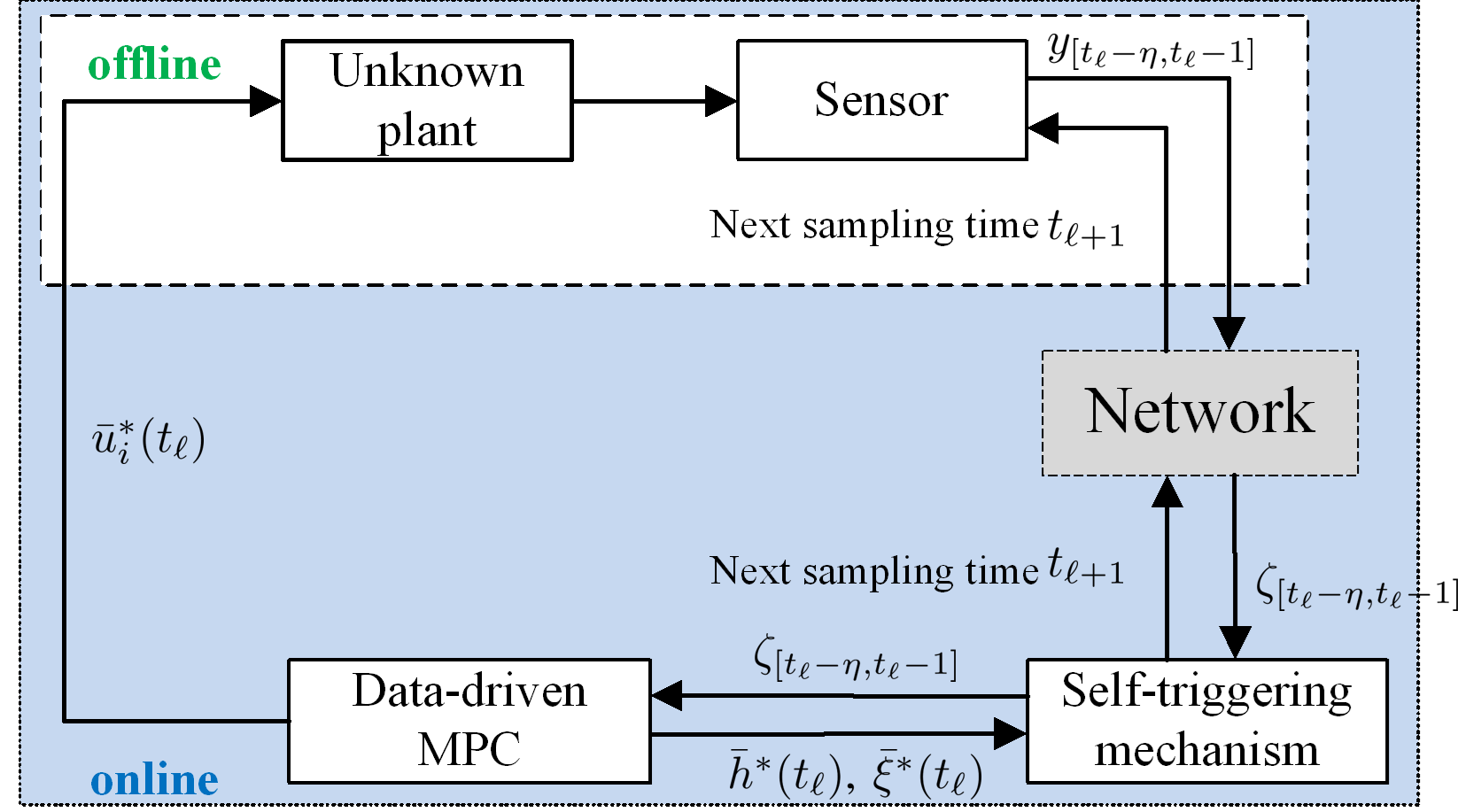}\\
	\caption{Pictorial description of system \eqref{eq:sys} with data-driven self-triggered MPC.}\label{fig:outputsys}
	\centering
\end{figure}
	
	In networked control systems, the plant is connected with a sensor which takes the output measurement at sampling times. 
	Sampled measurements are transmitted to a remote controller to construct control inputs. Frequent sampling and transmission inevitably results in waste of energy and network resources. To save the network bandwidth as well as prolong the life-cycle of the sensor, we incorporate a self-triggering module running at the controller side, to dictate when a new output measurement should be sampled and transmitted. 
	
	Specifically, we consider that 
	the output is transmitted over a network of limited bandwidth to the remote controller, while the controller-to-plant channel is assumed perfect either over wired lines or networks with sufficient bandwidth. Furthermore, to model the network effect,  
the output is corrupted by bounded additive noise $\Vert n_t\Vert\le \bar{n}$ for all $t \in \mathbb{N}_0
$ when transmitted over the network, i.e.,
\begin{equation}
	\zeta_t = y_t + n_t.
\end{equation}
See Fig. \ref{fig:outputsys} for a pictorial depiction of the system structure.	

Let $t_\ell\in\mathbb{N}_0$ stand for the time at which the $\ell$-th event (i.e., sampling and transmission) is triggered by the self-triggering module.
	Assume the sensor is equipped with a buffer of size $\eta$, which records the most recent $\eta$ historical measurements. Once an event is triggered, the data packet consisting of the $\eta$ buffered measurements is sent to the controller. The self-triggering module  computes the next triggering time by evaluating a triggering function on a predicted trajectory along with the received data packet. For example, at time $t_{\ell}$, the $\ell$-th transmission has just been made, and our self-triggering module receives the data packet $y_{[t_\ell - \eta, t_\ell - 1]}:=\{y_{t_\ell-\eta},y_{t_\ell-\eta+1},\cdots\!,y_{t_\ell-1}\}$, and computes the next triggering time $t_{\ell + 1}$ which is fed back to the sensor. During the interval $[t_{\ell - 1},t_{\ell})$, it is only required that the sensor samples output measurements for the sub-interval $[t_\ell - \eta,t_\ell)$, which is often much smaller than $[t_{\ell-1},t_{\ell})$, therefore  {saving considerable energy} as well as expanding life expectancy of the sensor.
	Of course, if the inter-triggering time $t_\ell-t_{\ell-1}$ is smaller than $\eta$, i.e., $t_{\ell  } - t_{\ell-1} \le \eta$, one only needs to send the packet of the new $t_\ell-t_{\ell-1}$ measurements as the old $\eta-(t_\ell-t_{\ell-1})$ ones have already been sent to the controller in the last packet.

	\subsection{Fundamental Lemma}\label{sec:fundamental}
	In this section, we briefly review the data-driven system representation based on the so-called Fundamental Lemma \cite{willems2005note}, which is key to derive our data-driven self-triggering predictive controllers.
	Before presenting the lemma, the definition of persistence of excitation is introduced first.
	
	\begin{definition}[\emph{Persistence of excitation}]\label{def:pe}
		{A sequence  $\{u_t\in\mathbb{R}^{n_u} \}_{t=0}^{N-1}$ is persistently exciting of order $L$ 
			 if ${\rm {rank}}(H_L(u)) = n_u L$.}
	\end{definition}

	Based on Def. \ref{def:pe}, it has been shown in \cite{willems2005note} that any trajectory of system \eqref{eq:sys} can be expressed as a linear combination of pre-collected input-output data $\{u^p_t,y^p_t\}_{t = 0}^{N - 1}$, provided that the input sequence $\{u^p_t\}_{t = 0}^{N - 1}$ is persistently exciting.
	This result, formally summarized below, is also known as the Fundamental Lemma \cite{willems2005note}.
	
	\begin{lemma}[\emph{Fundamental Lemma}]\label{lem:fundamental}
	Consider a persistently exciting input sequence $\{u^p_t\}_{t = 0}^{N - 1}$ of order $L + n_x$.
		A trajectory $\{\bar{u}_t, \bar{y}_t\}_{t = 0}^{L - 1}$ is generated by system \eqref{eq:sys} if and only if there exists a vector $g \in \mathbb{R}^{N - L + 1}$ such that the following holds
		\begin{equation}\label{eq:fundamental}
			\left[
			\begin{matrix}
				H_{L}\!\left(u^p\right) \\
				H_{L}\!\left(y^p\right)
			\end{matrix}
			\right] g=\left[
			\begin{matrix}
				\bar{u}_{[0,L - 1]} \\
				\bar{y}_{[0,L - 1]}
			\end{matrix}
			\right].
		\end{equation}
	\end{lemma}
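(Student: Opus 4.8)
The plan is to prove the two implications separately, since the ``if'' direction is elementary while the ``only if'' direction carries the real content of the lemma.

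For sufficiency (``$\Leftarrow$''), I would first observe that every column of the stacked Hankel matrix is itself a length-$L$ trajectory of \eqref{eq:sys}: its $k$-th column stacks the window $\{u^p_t,y^p_t\}_{t=k}^{k+L-1}$, and since the recorded data is a genuine trajectory and \eqref{eq:sys} is time-invariant, each window is generated by \eqref{eq:sys} from the state $x^p_k$. Because the set of length-$L$ input--output trajectories of a linear system is a linear subspace (closed under superposition), any combination $[H_L(u^p)^\top\ H_L(y^p)^\top]^\top g$ is again a valid trajectory. This settles the ``if'' direction without invoking persistence of excitation.

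For necessity (``$\Rightarrow$''), the map $g\mapsto [H_L(u^p)^\top\ H_L(y^p)^\top]^\top g$ must reach \emph{every} trajectory, and I would reduce this to a rank identity. The space of length-$L$ trajectories has dimension $n_u L + n_x$ ($n_u L$ free input samples plus $n_x$ for the free initial state, the outputs being then determined; injectivity of the initial-state-to-output-window map holds since $L\ge\eta$ under Assumption~\ref{as:ctrl}). As the column span already lies in this subspace, it suffices to establish
\[
\mathrm{rank}\begin{bmatrix} H_L(u^p)\\ H_L(y^p)\end{bmatrix} = n_u L + n_x,
\]
after which the given trajectory necessarily lies in the column span and some $g$ exists. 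To prove this identity I would factor the Hankel matrix through the (unknown) state sequence. With $X_{[0,N-L]}:=[\,x^p_0\ \cdots\ x^p_{N-L}\,]$, the dynamics yield
\[
\begin{bmatrix} H_L(u^p)\\ H_L(y^p)\end{bmatrix}
= \begin{bmatrix} I & 0\\ \mathcal{T}_L & \mathcal{O}_L\end{bmatrix}
\begin{bmatrix} H_L(u^p)\\ X_{[0,N-L]}\end{bmatrix},
\]
where $\mathcal{O}_L$ is the extended observability matrix and $\mathcal{T}_L$ the Toeplitz matrix of Markov parameters. Under Assumption~\ref{as:ctrl} with $L\ge\eta$, $\mathcal{O}_L$ has full column rank, so the left factor has full column rank and the rank in question equals $\mathrm{rank}\,[H_L(u^p)^\top\ X_{[0,N-L]}^\top]^\top$.

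The main obstacle is the remaining claim that this state--input matrix has full row rank $n_u L + n_x$ whenever $(A,B)$ is controllable and $u^p$ is persistently exciting of order $L+n_x$. I would argue by contradiction: a nontrivial left-kernel vector produces an annihilating relation $\sum_i a_i^\top u^p_{k+i} + b^\top x^p_k = 0$ holding across all windows $k$. Using controllability, each state $x^p_k$ is expressible through the preceding $n_x$ inputs, which lets me absorb the state rows into input rows and rewrite the relation as a nontrivial left-kernel vector of the \emph{larger} Hankel matrix $H_{L+n_x}(u^p)$. But persistence of excitation of order $L+n_x$ forces $\mathrm{rank}\,H_{L+n_x}(u^p)=n_u(L+n_x)$, i.e.\ full row rank, a contradiction. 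This folding step---and the bookkeeping showing that expressing states via past inputs loses no information---is the delicate part; once it is in place, the full rank follows and the proof is complete.
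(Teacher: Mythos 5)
First, a point of reference: the paper does not prove this lemma at all --- it is quoted directly from \cite{willems2005note} --- so your attempt has to be judged on its own merits rather than against an in-paper argument. Your sufficiency direction is correct (each column of the stacked Hankel matrix is a length-$L$ window of a true trajectory, and the trajectory set is a subspace), and reducing necessity to a rank statement about the data matrix is the standard route. One repairable blemish there: your dimension count $n_uL+n_x$ and the full column rank of the factor $\left[\begin{smallmatrix} I & 0\\ \mathcal{T}_L & \mathcal{O}_L\end{smallmatrix}\right]$ both require $\mathrm{rank}\,\mathcal{O}_L=n_x$, i.e.\ $L\ge\eta$, which the lemma as stated does not assume. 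The cleaner fix is to prove full \emph{row} rank $n_uL+n_x$ of $[H_L(u^p)^\top\ X_{[0,N-L]}^\top]^\top$ and conclude directly that the column span of the stacked Hankel matrix equals the image of $(u,x_0)\mapsto(u,\mathcal{T}_L u+\mathcal{O}_L x_0)$, i.e.\ the whole trajectory space, for any $L\ge 1$.

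The genuine gap is your proof of that full-row-rank claim, which is precisely the hard content of Willems' lemma. Controllability does \emph{not} make $x^p_k$ ``expressible through the preceding $n_x$ inputs'': the dynamics give $x^p_k=A^{n_x}x^p_{k-n_x}+\mathcal{C}\,u^p_{[k-n_x,k-1]}$ with $\mathcal{C}$ the controllability matrix, and the term $A^{n_x}x^p_{k-n_x}$ does not vanish (controllability says every state is \emph{reachable} by some input choice, not that the state along a fixed trajectory is a function of its recent inputs; the residual disappears only if $A$ is nilpotent). Hence an annihilating relation $\sum_i a_i^\top u^p_{k+i}+b^\top x^p_k=0$ cannot be ``folded'' into a left-kernel vector of $H_{L+n_x}(u^p)$, and there is also a boundary problem at $k<n_x$, where the required past inputs are not in the data. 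The known proofs work differently: propagate the single relation forward, writing $x^p_{k+j}=A^jx^p_k+\sum_{m=0}^{j-1}A^{j-1-m}Bu^p_{k+m}$ for $j=0,1,\dots,n_x$, to manufacture $n_x+1$ left annihilators of the \emph{extended input--state} matrix $[H_{L+n_x}(u^p)^\top\ X_{[0,N-L-n_x]}^\top]^\top$; persistence of excitation of order $L+n_x$ bounds the left-kernel dimension of that matrix by $n_x$, so these annihilators must be linearly dependent, and the staircase structure of that dependence combined with controllability forces $b=0$ and then $a=0$, a contradiction. Without this (or an equivalent) argument, your necessity direction --- and hence the lemma --- is not established.
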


	It can be concluded from Lemma \ref{lem:fundamental} that to generate a predicted trajectory of length $L$, the pre-collected data should be persistently exciting of at least order $L$.
	To validate this requirement, the next assumption 
	 is imposed.
	\begin{assumption}[{\emph{Pre-collected data}}]\label{as:multitre}
		Let $\{u^p_t\}_{t = 0}^{N - 1}$ be any sequence persistently exciting of order $L + n_x + \eta$.
		The output sequence  $\{y^p_{t}\}_{i = 0}^{N - 1}$ is generated from system \eqref{eq:sys} offline with any initial condition $x^{p}_0$, and input sequence $\{u^p_t\}_{t = 0}^{N - 1}$.
	\end{assumption}

	For notational brevity, we use $u^p$ and $y^p$ to represent sequences $\{u^p_t\}_{t = 0}^{N - 1}$ and $\{y^p_t\}_{t = 0}^{N - 1}$ in the following, respectively.
	Leveraging the observability index $\eta$ in Def. \ref{def:eta}, we construct an extended state for $t\ge \eta$ as follows
	\begin{equation}\label{eq:extx}
		\xi_t := \left[
		\begin{matrix}
			u_{[t - \eta, t - 1]}\\
			y_{[t - \eta, t - 1]}
		\end{matrix}
		\right]\in \mathbb{R}^{n_\xi}
	\end{equation}
	with $n_\xi := (n_u + n_y)\eta$.
	Then, {similar to \cite{1991Linear}}, system \eqref{eq:sys} can be converted into the ensuing linear system
	\begin{subequations}\label{eq:extsys}
		\begin{align}
			\xi_{t+1} &= \tilde{A} \xi_{t} + \tilde{B} u_{t}, \quad t \in \mathbb{N}_\eta\\
			y_{t} &= \tilde{C}\xi_t + \tilde{D} u_t
		\end{align}
	\end{subequations}
	for suitable matrices $(\tilde{A},\tilde{B},\tilde{C},\tilde{D})$ depending on $(A,B,C,D)$. 
	For subsequent analysis, let us denote the equilibrium point of the new system \eqref{eq:extsys} by 
	\begin{equation}
		\xi^e := \left[
		\begin{matrix}
			u_{[0, \eta- 1]}\\
			y_{[0, \eta- 1]}
		\end{matrix}
		\right]
	\end{equation} 
	in which $u_i = u^e$ and $y_i = y^e$ for all $i = 0,1, \cdots, \eta - 1$.
	As a matter of fact, such an augmented system (state) has been commonly employed in studies of data-driven MPC to simplify the stability analysis; see e.g., \cite{berberich2019data, berberich2021mpctec}.

	\section{Data-Driven self-triggering MPC}\label{sec:outputddstmpc}
	
	This section advocates a data-driven approach to designing a controller and a self-triggering mechanism
	 for unknown linear systems as in \eqref{eq:sys}.
	Commonly, a self-triggering mechanism determines the next transmission by comparing the current output with a predicted future output, and triggers a transmission if they differ considerably. However, when the system matrices are unknown, 
	three grand challenges are there:
	c1) how to obtain a predicted system trajectory for the self-triggering mechanism using only pre-collected data?
	c2) how to design stablizing data-driven control inputs? and,
	c3) how to perform performance analysis for the  data-driven self-triggered closed-loop control system? 
We first address c1) and c2) by putting forward a data-driven MPC scheme, which can compute a window of control inputs and {its corresponding predicted outputs} once a new output measurement is transmitted and received.

	\subsection{Data-driven MPC scheme}
	
	Although Lemma \ref{lem:fundamental} characterizes system trajectories in some sense, the correspondence between the input-output trajectory and the coefficient vector is not unique in general.  
		 In addition, the Fundamental Lemma does not account for any noise, so it cannot be directly employed for control and trajectory prediction.
Building on existing works \cite{berberich2019data,Liu2021data}, we propose a new data-driven MPC that, at each triggered time $t_{\ell}$, computes a window of optimal control inputs and associated outputs, using the received noisy data $\zeta_{[t_\ell - \eta, t_\ell - 1]}$.	 
	The control inputs are subsequently employed one by one at times $t \in [t_\ell, t_{\ell + 1})$.
	Let vectors $\bar{u}(t_\ell) := [\bar{u}^\top_{-\eta}(t_\ell)\,\cdots\, \bar{u}^\top_{L - 1}(t_\ell)]^\top$ and $\bar{y}(t_\ell) := [\bar{y}^\top_{-\eta}(t_\ell)\,\cdots\, \bar{y}^\top_{L - 1}(t_\ell)]^\top$ denote the predicted input and output from time $t_\ell - \eta$ to $t_\ell + L - 1$, and  $\bar{\xi}(t_\ell) = [\bar{\xi}_0^\top(t_\ell) \cdots \bar{\xi}_L^\top(t_\ell)]^\top$ the corresponding extended state with
	\begin{align*}
	\bar{\xi}_i(t_\ell) = \left[
	\begin{matrix}
	\bar{u}_{[i-\eta, i-1]}(t_\ell)\\
	\bar{y}_{[i-\eta, i-1]}(t_\ell)
	\end{matrix}
	\right]
	\end{align*}
	where $i \in \{0,1, \cdots\!, L \}$.
	Mathematically,  the following optimization problem is solved at each self-triggered time $t_\ell$
	\begin{subequations}\label{eq:ddmpc}
		\begin{align}
			J^*_L(u_{t_\ell},&  y_{t_\ell}) :=\nonumber 	\\
			\underset{\{g(t_\ell), h(t_\ell)\atop
				\bar{y}(t_\ell), \bar{u}(t_\ell)\}}{\min}
			&\sum_{i = 0}^{L - 1} \Vert \bar{u}_i(t_{\ell}) - u^e\Vert^2_{R}  + \Vert \bar{y}_i(t_{\ell}) - y^e\Vert^2_{Q} + \frac{\lambda_h}{\bar{n}}\Vert h(t_\ell) \Vert^2 \nonumber\\
			&+ \lambda_g \bar{n}\Vert g(t_\ell) \Vert^2  + \Vert \bar{\xi}_L(t_{\ell}) - \xi^e\Vert^2_{P} \label{eq:ddmpc0}\\
			{\rm s.t.}\quad 
			&\left[
			\begin{matrix}
				\bar{u}(t_\ell)\\
				\bar{y}(t_\ell) + h(t_\ell)
			\end{matrix}
			\right] = 
			\left[
			\begin{matrix}
				H_{L + \eta}(u^p)\\
				H_{L + \eta}(y^p)
			\end{matrix}
			\right] g(t_\ell)\label{eq:ddmpc1}\\
			&	\left[
			\begin{matrix}
				\bar{u}_{[-\eta, -1]}(t_\ell)\\
				\bar{y}_{[-\eta, -1]}(t_\ell)
			\end{matrix}
			\right] = 
			\left[
			\begin{matrix}
				u_{[t_\ell - \eta, t_\ell - 1]}\\
				\zeta_{[t_\ell - \eta, t_\ell - 1]}
			\end{matrix}
			\right] \label{eq:ddmpc2}\\
			& \bar{\xi}_L(t_\ell) \in \Xi_\epsilon
			\label{eq:ddmpc3} \\
			& \bar{u}_i \in \mathbb{U},\quad i \in \{1,2,\cdots\!, L - 1\}\label{eq:ddmpc4}.
		\end{align}
	\end{subequations}
	
		The data-driven MPC formulation  \eqref{eq:ddmpc} generalizes that of \cite{berberich2021mpctec} by accommodating noisy outputs. 
		Preselected weighting matrices $R \succ 0$ and $Q\succ 0$ penalize the distances from the predicted input and output to their equilibrium points. 
		Constraint \eqref{eq:ddmpc1} reflects the data-driven system representation in Lemma \ref{lem:fundamental}, where $g(t_\ell) \in \mathbb{R}^{N - L -\eta+ 1}$ stands for the coefficient vector at time $t_{\ell}$, and the vector $h(t_\ell) = [h^\top_{-\eta}(t_\ell)~\cdots~ h^\top_{L - 1}(t_\ell)]^\top$ is added to compensate for the unknown network-induced noise $n_t$ in predicting future outputs.
		Penalties are also imposed on the use of $g(t_\ell)$ and $h(t_\ell)$ in the objective function, {to guarantee system stability and improve robustness} against noise, with weights $\lambda_h > 0$ and $\lambda_g > 0$ balancing between minimizing different costs. 
		Feasibility sets $\mathbb{U}$ and $\Xi_\epsilon$ are prescribed and convex. 
		Finally, \eqref{eq:ddmpc3} is a terminal constraint, which together with the last summand having $P \succ 0$ in \eqref{eq:ddmpc0} ensures that the last predicted extended state $\bar{\xi}_L(t_\ell)$ (at the end of the window) stays in $\Xi_\epsilon$ (i.e., a small neighborhood of the desired equilibrium point). Overall, problem \eqref{eq:ddmpc} is convex and can be efficiently solved using off-the-shelf solvers.

	The terminal constraint \eqref{eq:ddmpc3} and the terminal cost $\Vert \bar{\xi}_L(t_{\ell}) - \xi^e\Vert^2_{P}$ are also critical ingredients in the model-based MPC, see e.g., \cite{JBR-DQM-MMD:19}. 
	To proceed, we make the following assumption on the terminal ingredients $\Xi_{\epsilon}$ {and the corresponding matrix $P$.}

	\begin{assumption}\label{as:terminal}
		There exist matrices $P = P^\top \succ 0$, $K \in \mathbb{R}^{n_u \times n_\xi}$, and a set $\Xi_\epsilon :\{\xi \in \mathbb{R}^{n_\xi} | \Vert \xi - \xi^e\Vert_{P} \le \epsilon \} \subseteq \mathbb{U}^\eta \times \mathbb{R}^\eta$ such that for all $\xi \in \Xi_\epsilon$, $u = u^e + K(\xi - \xi^e)$, and $y = (\tilde{C} + \tilde{D }K)\xi$, the following statements hold true
		\begin{enumerate}
			\item $u \in \mathbb{U}$, $\tilde{A}\xi + \tilde{B} u \in \Xi_\epsilon$ and,
			\item the following inequality holds
			\begin{equation}\label{eq:a+bkxi}
				\Vert (\tilde{A} + \tilde{B} K)\xi\Vert_P^2 \le \Vert \xi\Vert_P^2 - \Vert K\xi\Vert_R^2 -\Vert y\Vert_Q^2.
			\end{equation}
		\end{enumerate}
	\end{assumption}
	
	A feasible data-based  method for selecting the matrix $P$ and the set $\Xi_\epsilon$ has been discussed in \cite[Proposition 10]{berberich2021mpctec}.
	It will be shown in the next subsection that these terminal ingredients have influence on the inter-triggering time, and should be handled with care.
According to \eqref{eq:ddmpc2}, a requirement is imposed on the prediction window as follows.

	\begin{assumption}[{\emph{Prediction horizon}}]\label{as:horizon}
		The predict horizon of the data-driven MPC satisfies $L \ge \eta + 1$.
	\end{assumption}

	Without loss of generality, we consider for simplicity the equilibrium $\xi^e =[(u^e)^\top~ (y^e)^\top]^\top = 0$  in the remainder of the paper. 
	Let $(\bar{y}^*(t_\ell),\bar{u}^*(t_\ell),g^*(t_\ell),h^*(t_\ell))$ denote any optimal solution of problem \eqref{eq:ddmpc} at $t_\ell$, thereby yielding
the predicted extended state 
		 $\bar{\xi}^*(t_\ell)$. 
Let  $u_{t_\ell}$, $y_{t_\ell}$ and $\xi_{t_\ell}$ represent the actual input, output and extended state, respectively.
	The inter-triggering time between two consecutive self-triggered times is defined as $\tau_\ell : = t_{\ell + 1} - t_\ell$ for $\ell \in \mathbb{N}_{0}$, with $t_{0} = 0$.
	With these definitions, the following lemma indicates that the error between the predicted extended state $\bar{\xi}^*(t_\ell)$ and the actual extended state $\xi_t$ is bounded, and it can be rigorously quantified using the optimizer of \eqref{eq:ddmpc}.

	\begin{lemma}\label{lem:outputerror}
		Let Assumptions \ref{as:ctrl}---\ref{as:horizon} hold.
		Consider system \eqref{eq:sys} whose control input is generated by solving problem \eqref{eq:ddmpc} at each self-triggered time $t_\ell$.
		For every $\ell \in \mathbb{N}_0$, the error between $ \xi_{t_{\ell+1}} $ and $\bar{\xi}^*_{\tau_\ell}(t_\ell) $ satisfies
		\begin{subequations}\label{eq:error_xi}
			\begin{align}
				\Vert \xi_e(t_{\ell+1}) \Vert  &:=\Vert \xi_{t_{\ell+1}} - \bar{\xi}^*_{\tau_\ell}(t_\ell) \Vert\\
				&\le \big(\sqrt{\eta}\bar{n} + \Vert h^*_{[-\eta, -1]}(t_\ell)\Vert\big)\sqrt{\sum_{i = \tau_\ell}^{\tau_\ell + \eta - 1} \rho^i} \\
				&~~~+ \Vert h^*_{[\tau_\ell - \eta, \tau_\ell - 1]}(t_\ell)\Vert
			\end{align}	
		\end{subequations}
		where $\rho^i :=\Vert CA^{i + \eta}\Phi^\dag\Vert $ for $i= \tau_\ell,\tau_\ell+1, \cdots, \tau_\ell + \eta - 1$.  
	\end{lemma}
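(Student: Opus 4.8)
The plan is to exploit two structural facts: on $[t_\ell,t_{\ell+1})$ the control inputs are applied open-loop exactly as computed by \eqref{eq:ddmpc}, and by constraint \eqref{eq:ddmpc1} together with Lemma~\ref{lem:fundamental} the optimizer $\bar{y}^*(t_\ell)+h^*(t_\ell)$ is a genuine input-output trajectory of \eqref{eq:sys}. First I would split the extended-state error $\xi_e(t_{\ell+1})=\xi_{t_{\ell+1}}-\bar{\xi}^*_{\tau_\ell}(t_\ell)$ into its input and output blocks. Since $u_{t_\ell+i}=\bar{u}^*_i(t_\ell)$ for every index $i$ in the relevant range, the input block vanishes identically, so $\|\xi_e(t_{\ell+1})\|$ collapses to the norm of the output block $y_{[t_{\ell+1}-\eta,t_{\ell+1}-1]}-\bar{y}^*_{[\tau_\ell-\eta,\tau_\ell-1]}(t_\ell)$.

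Next I would quantify how the predicted outputs deviate from the true ones. Comparing the true trajectory of \eqref{eq:sys} with the reconstructed trajectory $\bar{y}^*+h^*$, both driven by identical inputs, the only discrepancy enters through their initial conditions on $[t_\ell-\eta,t_\ell-1]$: the predicted one uses the noisy measurements $\zeta=y+n$ together with the slack $h^*_{[-\eta,-1]}(t_\ell)$, cf.\ \eqref{eq:ddmpc2}. Invoking the observability index $\eta$ and the observability matrix $\Theta$ of Def.~\ref{def:eta}, I would reconstruct the underlying state from this initial window, so that the state-reconstruction error equals $\Phi^\dag$ applied to the stacked perturbation $n_{[t_\ell-\eta,t_\ell-1]}+h^*_{[-\eta,-1]}(t_\ell)$, where $\Phi$ is the corresponding (input-compensated) observability map built from $\Theta$. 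Propagating this state error forward through the dynamics and reading it out through $C$ produces, for each future output sample entering $\xi_{t_{\ell+1}}$, a term of the form $CA^{i+\eta}\Phi^\dag\big(n_{[t_\ell-\eta,t_\ell-1]}+h^*_{[-\eta,-1]}(t_\ell)\big)$, while the residual slack contributes the corresponding entry of $h^*(t_\ell)$.

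Finally I would stack these identities over exactly the $\eta$ samples building $\xi_{t_{\ell+1}}$, collecting the slacks into $h^*_{[\tau_\ell-\eta,\tau_\ell-1]}(t_\ell)$ and the propagation blocks into one tall matrix whose rows are $CA^{i+\eta}\Phi^\dag$ for $i=\tau_\ell,\ldots,\tau_\ell+\eta-1$. A triangle inequality separates $\|h^*_{[\tau_\ell-\eta,\tau_\ell-1]}(t_\ell)\|$ from the state-error term; bounding that tall matrix in spectral norm by $\sqrt{\sum_{i=\tau_\ell}^{\tau_\ell+\eta-1}\rho^i}$ and using $\|n_{[t_\ell-\eta,t_\ell-1]}\|\le\sqrt{\eta}\,\bar{n}$ then delivers the claimed estimate \eqref{eq:error_xi}.

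The main obstacle is the middle step: justifying that $\bar{y}^*(t_\ell)+h^*(t_\ell)$ is a bona fide trajectory and isolating the state-reconstruction error as $\Phi^\dag\big(n_{[t_\ell-\eta,t_\ell-1]}+h^*_{[-\eta,-1]}(t_\ell)\big)$, while keeping the powers of $A$ and the summation range of $\rho^i$ consistent with the index convention for $\bar{\xi}_i(t_\ell)$. A secondary subtlety is the short-horizon regime $\tau_\ell<\eta$, in which $[\tau_\ell-\eta,\tau_\ell-1]$ reaches back past $t_\ell$; there the pertinent outputs are the transmitted measurements $\zeta$ themselves, and one must verify that the same estimate still absorbs the noise contribution $\sqrt{\eta}\,\bar{n}$.
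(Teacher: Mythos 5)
Your proposal follows essentially the same route as the paper's proof: the zero input-block decomposition of $\xi_e(t_{\ell+1})$, the identification of $\bar{y}^*(t_\ell)+h^*(t_\ell)$ (via \eqref{eq:ddmpc1} and Lemma~\ref{lem:fundamental}) as a true system trajectory, the propagation of the initial-window perturbation $n_{[t_\ell-\eta,t_\ell-1]}+h^*_{[-\eta,-1]}(t_\ell)$ through $CA^{i+\eta}\Phi^\dag$, and the final stacking/triangle-inequality step with $\Vert n_{[t_\ell-\eta,t_\ell-1]}\Vert\le\sqrt{\eta}\,\bar{n}$. The structure, key identities, and resulting bound all coincide with the paper's argument, so no further comparison is needed.
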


	\begin{proof}
		It follows from \eqref{eq:ddmpc3} that 
		\begin{align*}
		\xi_e(t_\ell + \tau_\ell)& = 
		\left[
		\begin{matrix}
		\bar{u}^*_{[\tau_\ell - \eta, \tau_\ell - 1]}(t_\ell)\\
		y_{[t_\ell + \tau_\ell - \eta, t_\ell + \tau_\ell - 1]}
		\end{matrix}
		\right] \!-\! \left[
		\begin{matrix}
		\bar{u}^*_{[\tau_\ell - \eta, \tau_\ell - 1]}(t_\ell)\\
		\bar{y}^*_{[\tau_\ell - \eta, \tau_\ell - 1]}(t_\ell)
		\end{matrix}
		\right]\\
		& = 
		\left[
		\begin{matrix}
		0\\
		y_{[t_\ell + \tau_\ell - \eta, t_\ell + \tau_\ell - 1]} - \bar{y}^*_{[\tau_\ell - \eta, \tau_\ell - 1]}(t_\ell)
		\end{matrix}
		\right].
		\end{align*}
		Therefore, an upper bound on the extended state error $\Vert \xi_e(t_\ell + \tau_\ell)\Vert$ can be obtained by calculating $y_e(t_\ell + \tau_\ell):= y_{t_\ell + \tau_\ell} - \bar{y}^*_{\tau_\ell}(t_\ell)$.
		It follows from \eqref{eq:ddmpc1} that 
		\begin{align}\label{eq:barystar}
		\bar{y}^*_{\tau_\ell}(t_\ell) = I_{\tau_\ell}H_{\eta + L}(y^p) g^*(t_\ell) - h^*_{\tau_\ell}(t_\ell)
		\end{align}
		where $I_{\tau_\ell}$ denote the corresponding $\tau_\ell$-th block of matrix $H_{\eta + L}(y^p) g^*(t_\ell)$ for the optimizer $\bar{y}^*_{\tau_\ell}(t_\ell)$.
		Let $\hat{y}_{\tau_\ell}(t_\ell) := I_{\tau_\ell}H_{\eta + L}(y^p) g^*(t_\ell)$, which is a trajectory of system \eqref{eq:sys} according to Lemma \ref{lem:fundamental}.
		Based on \eqref{eq:sys}, one can deduce that
		\begin{align*}
		\hat{y}_{\tau_\ell}(t_\ell) \!-\! y_{t_\ell + \tau_\ell} \!= \!CA^{\tau_\ell + \eta}\Theta^{\dag}(\hat{y}_{[-\eta, -1]}(t_\ell) \!-\! y_{[t_\ell - \eta, t_\ell - 1]}).
		\end{align*}
		It follows from \eqref{eq:barystar} and \eqref{eq:ddmpc2} that
		\begin{align*}
		&\hat{y}_{[-\eta, -1]}(t_\ell) - y_{[t_\ell - \eta, t_\ell - 1]}\\
		&~~= y_{[t_\ell - \eta, t_\ell - 1]} + n_{[t_\ell - \eta, t_\ell - 1]} + h^*_{[-\eta, -1]}(t_\ell) - y_{[t_\ell - \eta, t_\ell - 1]}\\
		&~~ = n_{[t_\ell - \eta, t_\ell - 1]} + h^*_{[-\eta, -1]}(t_\ell).
		\end{align*}
		Therefore, the error between the actual output and the predicted output obeys
		\begin{align}\label{eq:ye}
		y_e(t_\ell \!+\! \tau_\ell) \!=\! CA^{\tau_\ell + \eta}\Theta^\dag\big(n_{[t_\ell - \eta, t_\ell - 1]} \!+\! h^*_{[-\eta, -1]}(t_\ell)\big)\!-\! h^*_{\tau_\ell}(t_\ell).
		\end{align}
	We have that
		\begin{align*}
		\Vert \xi_e(t_\ell + \tau_\ell)\Vert &= \Big\Vert\Big[(CA^{\tau_\ell + \eta})^\top~ \cdots~ (CA^{\tau_\ell + 2\eta - 1})^\top\Big]^\top\nonumber\\
		&\quad\times\big[n_{[t_\ell - \eta, t_\ell - 1]} + h^*_{[-\eta, -1]}(t_\ell)\big]\nonumber\\
		& \quad- \big[(h^*_{\tau_\ell}(t_\ell))^\top~ \cdots~ (h^*_{\tau + \eta - 1}(t_\ell))^\top\big]^\top\Big\Vert\\
		&\le \big(\sqrt{\eta}\bar{n} + \Vert h^*_{[-\eta, -1]}(t_\ell)\Vert\big)\sqrt{\sum_{i = \tau_\ell}^{\tau_\ell + \eta - 1} \rho^i}\nonumber \\
		&\quad+ \Vert h^*_{[\tau_\ell - \eta, \tau_\ell - 1]}(t_\ell)\Vert
		\end{align*}
		which completes the proof.
	\end{proof}
According to Lemma \ref{lem:outputerror}, an upper bound on the error between the predicted extended state and the actual extended state is characterized in terms of known or computable parameters $h^*(t_\ell)$, $\bar{n}$, and unknown parameters $\rho^i$ for $i = \tau_\ell, \cdots, \tau_\ell + \eta - 1$.	
Before running the system online, upper bounds of parameters $\rho^i$ can be obtained offline by solving the optimization problem proposed in \cite[Section V. B]{berberich2020cdc} for $i = 1, \cdots, L - 1$
\begin{subequations}\label{eq:rho}
	\begin{align}
	J_i^\prime := & \max_{y,g^\prime}~\Vert y_i \Vert_{\infty}\\
	&\,~~{\rm s.t.}~ \Vert y_{[0, \eta - 1]}\Vert_{\infty} \le 1 \label{eq:rho_1}\\
	&\,\quad\quad\left[
	\begin{matrix}
	H_{i + 1}(u^p)\\
	H_{i + 1}(y^p)
	\end{matrix}
	\right]g^\prime = 
	\left[
	\begin{matrix}
	0\\
	y_{[0,i]}
	\end{matrix}
	\right]\label{eq:rho_2}
	\end{align}
\end{subequations}
where $H_{i + 1}(u^p)$ denotes the first $(i + 1)n_u$ rows of matrix $H_{L + \eta}(u^p)$,  and likewise for $H_{L + \eta}(y^p)$; vector $g\prime \in \mathbb{R}^{N - L - \eta + 1}$, and $y_{[0,i]}$ consists of the  $i + 1$ predicted outputs.	
It has been shown in \cite{berberich2020cdc}  that $J_i^\prime \ge \Vert CA^{i + \eta}\Phi^\dag\Vert$, and hence, an upper bound for $\rho^i$ is obtained, i.e., $J_i^\prime\ge \rho^i$.
Therefore, one gets an upper bound on $\Vert \xi_e(t_{\ell + 1})\Vert$.
	\subsection{Self-triggering mechanism}
	
	Observe that the prediction horizon constrains at most $L - 1$ future outputs at time $t_\ell$ can be obtained by solving \eqref{eq:ddmpc}.
	Therefore, the inter-triggering time between any two consecutive self-triggered times obeys  $\tau_\ell \le L - 1$ for all $\ell \in \mathbb{N}_0$.
Leveraging the upper bound in Lemma \ref{lem:outputerror}, our self-triggering mechanism is given in the following lemma.

	\begin{lemma}\label{lem:tau}
		Let Assumptions \ref{as:ctrl}---\ref{as:horizon} hold. 
		Consider system \eqref{eq:sys} with control inputs generated by the data-driven MPC \eqref{eq:ddmpc} at each self-triggered time $t_\ell$ for all $\ell \in \mathbb{N}_0$.
		Assume further that \eqref{eq:ddmpc} is feasible at $t_0$.
		For appropriate $\lambda_g > 0$, $\lambda_h > 0$, $\epsilon \ge 0$, and $P \succ 0$ satisfying Assumption \ref{as:terminal}, there exists a constant $\bar{\sigma} \in (0,1)$ such that for all $\sigma \in (0,\bar{\sigma})$,  problem \eqref{eq:ddmpc}  is feasible at all $t_{\ell}\in \mathbb{N}_0$, 
			if i) the constant $r$, matrices $P_r$ and $K_r$ are chosen obeying Assumption \ref{as:terminal} and 
			\begin{align}\label{eq:rleepsilon}
			\frac{\bar{\lambda}_{P_r}}{\underline{\lambda}_{P}}\Big(1 - \frac{\underline{\lambda}_{K_r^\top R K_r}}{\bar{\lambda}_{P_r}}\Big)^L r^2 \le \epsilon^2 \le r^2
			\end{align}
			and ii) the inter-triggering time satisfies 
			\begin{align}\label{eq:tau}
			\tau_\ell = \min\big\{ \hat{\tau}_\ell,\, \check{\tau}_\ell,\, L - 1 \big\}
			\end{align}
			with
			\begin{align}
			\hat{\tau}_\ell &:= \sup\bigg\{\tau_\ell\Big| (\sqrt{\eta}\bar{n} + \Vert h^*_{[-\eta, -1]}(t_\ell)\Vert)\sqrt{\sum_{i = \tau_\ell}^{\tau_\ell + \eta - 1} \rho^i} \nonumber\\
			&~\quad \quad\quad+ \Vert h^*_{[\tau_\ell - \eta, \tau_\ell - 1]}(t_\ell)\Vert  \le \frac{r}{\bar{\lambda}_{P_r}} - \Vert \bar{\xi}^*_{\tau_\ell}(t_\ell)\Vert \bigg\}\label{eq:tautau1}
			\end{align}
			and
			\begin{align}
			\check{\tau}_\ell &:=\sup\bigg\{\tau_\ell\Big| 2 \Big[\bar{\lambda}_Q + 2 \bar{\lambda}_{P_r} + \lambda_g \bar{n} \Vert H^\ddag_{u\xi}\Vert^2\Big(1 + \frac{\bar{\lambda}_{P_r}}{\underline{\lambda}_{R}}\Big)\Big]\nonumber\\
			&\times\Big[\big(\eta\bar{n}^2 + \Vert h^*_{[-\eta, -1](t_\ell)} \Vert^2\big)\sum_{i = 0}^{\tau_\ell - 1}(\rho^{i + \eta})^2 + \sum_{i = 0}^{\tau_\ell - 1}\Vert h^*_{i}(t_\ell)\Vert^2\Big]\nonumber\\
			& + \lambda_h\eta\bar{n}^2  + \epsilon^2 + 2\Big[\bar{\lambda}_{P_r} + \lambda_g\bar{n}\Vert H_{u \xi}^\ddag\Vert^2\Big(1 + \frac{\bar{\lambda}_{P_r}}{\underline{\lambda}_{R}}\Big)\Big]\nonumber\\
			&\times \Vert\bar{\xi}^*_{\tau_\ell}(t_\ell)\Vert^2 \le \sigma \sum_{i = 0}^{\tau_\ell - 1} \Vert \bar{\xi}^*_i(t_\ell)\Vert^2  \bigg\}\label{eq:tautau2}
			\end{align}
			where matrix $H_{u\xi}$ is defined by
			\begin{align}\label{eq:Huxi}
			H_{u \xi} = \left[
			\begin{matrix}
			H_{L + \eta}(u^p)\\
			H_1(\xi^p_{N - L - \eta + 1})
			\end{matrix}
			\right].
			\end{align}
	\end{lemma}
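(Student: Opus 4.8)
The plan is to prove recursive feasibility by induction on the triggering index $\ell$. The base case is exactly the standing hypothesis that \eqref{eq:ddmpc} is feasible at $t_0$. For the inductive step I would assume that \eqref{eq:ddmpc} admits an optimal solution $(\bar u^*(t_\ell),\bar y^*(t_\ell),g^*(t_\ell),h^*(t_\ell))$ at $t_\ell$, whose terminal state obeys $\bar\xi^*_L(t_\ell)\in\Xi_\epsilon$, and then exhibit an explicit feasible candidate for the problem solved at $t_{\ell+1}=t_\ell+\tau_\ell$, with $\tau_\ell$ taken as in \eqref{eq:tau}. Since $\tau_\ell\le L-1$, the old prediction window and the new one overlap, which is what makes the candidate construction possible.

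The candidate is obtained by shifting the optimal trajectory forward by $\tau_\ell$ steps and grafting on a terminal tail generated by the feedback $u=u^e+K(\xi-\xi^e)$ of Assumption \ref{as:terminal}. Its first $\eta$ input-output blocks are pinned to the measured data $u_{[t_{\ell+1}-\eta,t_{\ell+1}-1]}$ and $\zeta_{[t_{\ell+1}-\eta,t_{\ell+1}-1]}$, so that the initial-condition constraint \eqref{eq:ddmpc2} holds by construction, while the slack $h$ is set to absorb the gap between these noisy measurements and the underlying noise-free trajectory, in the same manner as in the proof of Lemma \ref{lem:outputerror}. Because the candidate input-output sequence is a genuine trajectory of \eqref{eq:sys} up to this slack, the Fundamental Lemma (Lemma \ref{lem:fundamental}) furnishes a coefficient vector $g$ satisfying \eqref{eq:ddmpc1}. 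The input constraint \eqref{eq:ddmpc4} then follows from $\bar u^*_i(t_\ell)\in\mathbb U$ on the overlapping portion and from the first clause of Assumption \ref{as:terminal} on the terminal tail.

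The crux is verifying the terminal constraint \eqref{eq:ddmpc3}, namely that the candidate terminal state lands in $\Xi_\epsilon$ despite the mismatch between the actual extended state $\xi_{t_{\ell+1}}$ and the prediction $\bar\xi^*_{\tau_\ell}(t_\ell)$. Lemma \ref{lem:outputerror} quantifies this mismatch through \eqref{eq:error_xi}, and the definition of $\hat\tau_\ell$ in \eqref{eq:tautau1} is precisely the requirement that the right-hand side of \eqref{eq:error_xi} stay below $r/\bar\lambda_{P_r}-\Vert\bar\xi^*_{\tau_\ell}(t_\ell)\Vert$; this confines $\xi_{t_{\ell+1}}$ to the enlarged terminal region governed by $r$, $P_r$, and $K_r$. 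Propagating the closed-loop map $\tilde A+\tilde B K_r$ from this region and invoking the Lyapunov-type decrease \eqref{eq:a+bkxi}, the weighted norm contracts geometrically; condition \eqref{eq:rleepsilon} is calibrated so that after the length-$L$ tail the terminal state has shrunk from the $r$-level in the $P_r$ metric to the $\epsilon$-level in the $P$ metric, i.e., into $\Xi_\epsilon$. This establishes \eqref{eq:ddmpc3} and hence feasibility at $t_{\ell+1}$, closing the induction.

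It remains to check that $\tau_\ell$ in \eqref{eq:tau} is a well-defined positive integer. The entries $\hat\tau_\ell$ and $L-1$ are feasibility-driven and stay at least one for sufficiently small noise, whereas $\check\tau_\ell$ in \eqref{eq:tautau2} is a cost-decrease safeguard used later for stability; taking the minimum only shortens the inter-triggering interval, which can never destroy the feasibility argument above since a smaller $\tau_\ell$ increases the overlap and lowers the accumulated error in \eqref{eq:error_xi}. The role of $\bar\sigma$ is to guarantee that, for every $\sigma\in(0,\bar\sigma)$, the bound in \eqref{eq:tautau2} is attainable with $\check\tau_\ell\ge 1$, so that the minimum in \eqref{eq:tau} never returns zero. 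The step I expect to be the main obstacle is the terminal bookkeeping of the previous paragraph: tracking how the noise- and slack-induced perturbation of Lemma \ref{lem:outputerror} propagates through the length-$L$ terminal tail and matching the accumulated contraction factor $(1-\underline\lambda_{K_r^\top R K_r}/\bar\lambda_{P_r})^L$ against the calibration \eqref{eq:rleepsilon}, all while keeping the enlarged region (in $P_r$) and the constraint set $\Xi_\epsilon$ (in $P$) compatible through $\epsilon\le r$.
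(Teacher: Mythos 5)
Your candidate construction is not the one the paper uses, and the two arguments you give do not fit together. In the paper's proof the old optimal trajectory is discarded entirely: condition \eqref{eq:tautau1} is used, via Lemma \ref{lem:outputerror}, to show that the \emph{actual} extended state satisfies $\Vert \xi_{t_{\ell+1}}\Vert_{P_r}\le r$, i.e., it lies in the enlarged terminal region $\Xi_r$; the candidate input is then the terminal feedback $\bar{u}_i(t_{\ell+1})=K\bar{\xi}_i(t_{\ell+1})$ for \emph{all} $i\in\{0,\dots,L-1\}$, the candidate outputs are the corresponding true future outputs, the slack is $h_{[-\eta,-1]}(t_{\ell+1})=n_{[t_{\ell+1}-\eta,t_{\ell+1}-1]}$ with $h_i(t_{\ell+1})=0$ for $i\ge 0$, and $g(t_{\ell+1})$ is obtained by applying $H_{u\xi}^\ddag$ to the candidate data. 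The terminal constraint then follows by iterating \eqref{eq:a+bkxi} over the \emph{whole} horizon, giving the $L$-step contraction \eqref{eq:xil0}, which together with the calibration \eqref{eq:rleepsilon} maps the $r$-ball in the $P_r$-norm into the $\epsilon$-ball in the $P$-norm. Your terminal-bookkeeping paragraph is exactly this argument (length-$L$ propagation of $\tilde A+\tilde BK_r$ from $\xi_{t_{\ell+1}}$, calibrated by \eqref{eq:rleepsilon}), but it is incompatible with the shift-plus-tail candidate of your second paragraph: for a shifted candidate the appended tail has length $\tau_\ell$, not $L$, the relevant terminal state is the old $\bar{\xi}^*_L(t_\ell)$ propagated by invariance of $\Xi_\epsilon$, and \eqref{eq:rleepsilon} never enters, so your proposal cannot explain why the exponent $L$ appears in \eqref{eq:rleepsilon}. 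Moreover, the shift route faces a genuine obstacle in this noisy self-triggered setting: the old optimizer is consistent with the \emph{old} noisy initialization, while \eqref{eq:ddmpc2} pins the new candidate to the \emph{new} measurements, which differ from $\bar{\xi}^*_{\tau_\ell}(t_\ell)$ by the error of Lemma \ref{lem:outputerror} plus fresh noise; after gluing, the underlying trajectory's terminal state lies only in an $h^*$- and error-inflated neighborhood of $\Xi_\epsilon$, where the invariance granted by Assumption \ref{as:terminal} no longer applies. Avoiding exactly this is why the paper introduces $\Xi_r$, \eqref{eq:tautau1}, and \eqref{eq:rleepsilon}.

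Second, you discard the $\check{\tau}_\ell$ condition and misstate the role of $\bar{\sigma}$. Shrinking $\sigma$ \emph{tightens} \eqref{eq:tautau2}, since its right-hand side $\sigma\sum_{i=0}^{\tau_\ell-1}\Vert\bar{\xi}^*_i(t_\ell)\Vert^2$ decreases; hence "for all $\sigma\in(0,\bar{\sigma})$ the bound is attainable with $\check{\tau}_\ell\ge 1$" cannot be the mechanism --- if anything, small $\sigma$ makes $\check{\tau}_\ell$ smaller, and what you describe would require a \emph{lower} bound on $\sigma$, not an upper bound. In the paper, $\bar{\sigma}$ emerges from a Lyapunov argument that constitutes the second half of the proof and that you omit altogether: since $(\tilde C,\tilde A)$ is detectable there exists an IOSS Lyapunov function $W(\xi)=\xi^\top P_\xi\xi$; one forms $V=J_L^*+\gamma W$, bounds the cost difference $\mathcal{T}_1$ (using the candidate above) and the difference $\mathcal{T}_2$ of $W$ separately, and invokes \eqref{eq:tautau2} to absorb the accumulated prediction-error and slack terms, arriving at $V(\xi_{t_{\ell+1}})\le\gamma_2 V(\xi_{t_\ell})+\alpha_3(\bar n)$ with $\gamma_2<1$ precisely when $\sigma<\bar{\sigma}\le\gamma/4$. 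This decrease is how the paper concludes the lemma (and is what the subsequent stability theorem rests on); since the lemma's statement quantifies over $\sigma\in(0,\bar{\sigma})$, a proof of it cannot simply defer that part "for later."
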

\begin{proof}
	First, for some $\ell \in \mathbb{N}_0$, we assume that the problem is feasible at $t_\ell$ and construct a candidate solution at $t_{\ell + 1}$.
	Then, leveraging a carefully designed Lyapunov function, recursive feasibility of problem \eqref{eq:ddmpc} is proven by showing the decrease of the Lyapunov function.
	
	Suppose that \eqref{eq:ddmpc} is feasible at $t_\ell$ for some $\ell \in \mathbb{N}_0$. 
	Denote a candidate solution of  \eqref{eq:ddmpc} at $t_{\ell + 1}$ by $\bar{y}(t_{\ell + 1})$, $\bar{u}(t_{\ell + 1})$, $g(t_{\ell + 1})$, and $h(t_{\ell + 1})$.
	Using \eqref{eq:error_xi}, one gets from \eqref{eq:tau}--\eqref{eq:tautau1} that
	\begin{align*}
	\Vert \xi_{t_{\ell + 1}}\Vert_{P_r} \le \Vert \xi_e(t_{\ell + 1})\Vert_{P_r} + \Vert \bar{\xi}^*_{\tau_\ell}(t_\ell)\Vert_{P_r} \le r.
	\end{align*}
	Thus, for terminal region $
	\Xi_r:\{\xi \in \mathbb{R}^{n_\xi} | \Vert \xi - \xi^e\Vert_{P_r} \le r\}$
	there exist matrices $P_r$ and $K_r$ such that conditions in Assumption  \ref{as:terminal} hold.
	Let $\bar{u}_i(t_{\ell + 1}) = K \bar{\xi}_{i}(t_{\ell + 1})$ and $\bar{y}_i(t_{\ell + 1}) = y_{i + t_{\ell+ 1}}$ for $i \in \{0,\cdots\!, L - 1\}$.
	Capitalizing on \eqref{eq:ddmpc1}, the slack variable $h(t_{\ell + 1})$  can be chosen as $h_{[-\eta, -1]}(t_{\ell + 1}) = n_{[t_{\ell + 1} - \eta, t_{\ell + 1} - 1]}$ and $h_i(t_{\ell + 1}) = 0$ for $i \in \{0, 1,\cdots, L - 1\}$.
	Set 
	\begin{align}\label{eq:barg}
	{g}(t_{\ell + 1}) = H^\ddag_{u\xi}
	\left[
	\begin{matrix}
	u_{[0, L - 1]}(t_{\ell + 1})\\
	\xi(t_{\ell + 1})
	\end{matrix}
	\right]
	\end{align}
	with  $H_{u\xi}$ in \eqref{eq:Huxi}.
	It follows from \eqref{eq:a+bkxi} that
	\begin{align*}
	\Vert \bar{\xi}_{i + 1}(t_{\ell + 1})\Vert^2_{P_r} &\le \Vert \bar{\xi}_i(t_{\ell + 1})\Vert^2_{P_r} - \Vert \bar{\xi}_{i}(t_{\ell + 1})\Vert_{K_r^\top R K_r}^2\\
	& \le \big[1 - \underline{\lambda}_{K_r^\top R K_r}/\bar{\lambda}_{P_r}\big]\Vert \bar{\xi}_{i}(t_{\ell + 1}) \Vert_{P_r}^2
	\end{align*}
	for all $i \in \{0, \cdots, L - 1\}$.
	Recursively, we arrive at
	\begin{align}\label{eq:xil0}
	\Vert \bar{\xi}_L(t_{\ell + 1})\Vert_{P_r}^2 \le \big[1 - \underline{\lambda}_{K_r^\top R K_r}/\bar{\lambda}_{P_r}\big]^L\Vert \bar{\xi}_{0}(t_{\ell + 1}) \Vert_{P_r}^2.
	\end{align}
	Combining \eqref{eq:rleepsilon} with $\Vert \bar{\xi}_{0}(t_{\ell + 1}) \Vert_{P_r}^2 \le r^2$, inequality \eqref{eq:xil0} implies that $\Vert \bar{\xi}_L(t_{\ell + 1})\Vert_{P}^2 \le \epsilon^2$.
	Therefore, if problem \eqref{eq:ddmpc} is feasible at $t_\ell$, it is feasible at $t_{\ell + 1}$.
	
	Next, we construct a Lyapunov function to show the recursive feasibility.
	Since $(C,A)$ is observable, $(\tilde{C}, \tilde{A})$ is detectable.
	According to \cite{CAI2008326}, there exists an input-output-to-state stability Lyapunov function $W(\xi) = \xi^\top P_{\xi}\xi$ such that 
	\begin{equation}\label{eq:iossw}
	W(\tilde{A}\xi + \tilde{B}u) - W(\xi) \le -\frac{1}{2}\Vert \xi \Vert^2 + c_1 \Vert u\Vert^2 + c_2 \Vert y\Vert^2
	\end{equation}
	for suitable $c_1, c_2 > 0$, $P_\xi \succ 0$, and all $u \in \mathbb{R}^{n_u}$, $\xi\in \mathbb{R}^{n_\xi}$, $y = \tilde{C} \xi + \tilde{D}\xi$.
	Consider the Lyapunov function $V(\xi_t) = J_L^*(\xi_t) + \gamma W(\xi_t)$ having $\gamma  = \min\{\underline{\lambda}_Q, \underline{\lambda}_R\} / {\max\{c_1, 2c_2\}}$.
	Recalling the candidate solution $(\bar{u}(t_{\ell + 1}), \bar{y}(t_{\ell + 1}), g(t_{\ell + 1}), h(t_{\ell + 1}))$ for problem \eqref{eq:ddmpc}  in Lemma \ref{lem:tau}. 
	Denote the corresponding cost function by $\bar{J}_L(\xi_{t_\ell})$.
	The optimizer of  \eqref{eq:ddmpc} at $t_{\ell}$ is denoted by $(\bar{u}^*(t_\ell), \ \bar{y}^*(t_\ell), g^*(t_\ell), h^*(t_\ell))$.
	The difference of the Lyapunov function at two consecutive self-triggered times satisfies
	\begin{align}\label{eq:diffV}
	&V(\xi_{t_{\ell + 1}}) - V(\xi_{t_\ell}) \\
	=\;& \bar{J}_L(\xi_{t_{\ell + 1}}) + \gamma W(\xi_{t_{\ell + 1}}) - J^*_L(\xi_{t_\ell}) - \gamma W(\xi_{t_\ell})\\ 
	\le\;& \underbrace{\big(\bar{J}_L(\xi_{t_{\ell + 1}})- J^*_L(\xi_{t_\ell})\big)}_{\stackrel{\triangle}{=}\mathcal{T}_1} + \underbrace{\gamma \big(W(\xi_{t_{\ell + 1}})  -  W(\xi_{t_\ell})\big)}_{\stackrel{\triangle}{=}\mathcal{T}_2}.
	\end{align} 
	
	Next, we derive upper bounds on the two terms.
	
	\emph{Part a: Upper-bounding $\mathcal{T}_1$.}
	According to \eqref{eq:ddmpc0}, it can be obtained that
	\begin{align*}
\mathcal{T}_1	& \le \sum_{i = 0}^{L - 1}\big(\Vert \bar{u}_i(t_{\ell + 1}) \Vert^2_R + \Vert \bar{y}_i(t_{\ell + 1}) \Vert^2_Q\big) \nonumber\\
	&\quad+ \lambda_g\bar{n}\Vert g(t_{\ell + 1}) \Vert^2 + (\lambda_{h}/\bar{n})\Vert h(t_{\ell + 1})\Vert^2 + \Vert \bar{\xi}_L(t_{\ell + 1})\Vert^2_{P}\nonumber\\
	&\quad - \Big[\sum_{i = 0}^{L - 1}\big(\Vert \bar{u}^*_i(t_\ell) \Vert^2_R + \Vert \bar{y}^*_i(t_\ell) \Vert^2_Q\big) + \lambda_g\bar{n}\Vert g^*(t_{\ell + 1}) \Vert^2 \nonumber\\
	&\quad+ (\lambda_{h}/\bar{n})\Vert h^*(t_{\ell + 1})\Vert^2 + \Vert \bar{\xi}^*_L(t_{\ell + 1})\Vert^2_{P}\Big].
	\end{align*}
	In addition, since $\bar{u}_i(t_{\ell + 1}) = K \bar{\xi}(t_{\ell + 1})$, it can be deduced recursively from \eqref{eq:a+bkxi} that
	\begin{align*}
	&\sum_{i = 0}^{L - 1}\big(\Vert \bar{u}_i(t_{\ell + 1}) \Vert^2_R + \Vert \bar{y}_i(t_{\ell + 1}) \Vert^2_Q\big) \le \Vert {\xi}_{t_{\ell + 1}}\Vert_{P_r}^2 \!-\! \Vert \bar{\xi}_L(t_{\ell})\Vert_{P_r}^2\nonumber\\
	& \le 2\bar{\lambda}_{P_r}\big(\Vert \xi_e(t_{\ell + 1})\Vert^2 + \Vert \bar{\xi}^*_{\tau_\ell}(t_{\ell})\Vert^2\big).
	\end{align*}
	Notice from \eqref{eq:barg} that
	\begin{align*}
	&\quad~\Vert g(t_{\ell + 1})\Vert^2 \le \Vert H^{\ddag}_{u \xi}\Vert^2(\Vert \bar{u}_{[0, L - 1](t_\ell)}\Vert^2 + \Vert \xi_{t_{\ell + 1}}\Vert^2) \\
	&\le \Vert H^{\ddag}_{u \xi}\Vert^2(1 + \bar{\lambda}_{P_r}/\underline{\lambda}_{R})\Vert \xi_{t_{\ell + 1}}\Vert^2\\
	& \le 2 \Vert H^{\ddag}_{u \xi}\Vert^2\Big(1 + \frac{\bar{\lambda}_{P_r}}{\underline{\lambda}_{R}}\Big)\big(\Vert \xi_e(t_{\ell + 1})\Vert^2 + \Vert \bar{\xi}^*_{\tau_\ell}(t_{\ell})\Vert^2\big).
	\end{align*} 
	Since $h_{[-\eta, -1]}(t_{\ell + 1}) = n_{[t_{\ell + 1} - \eta, t_{\ell + 1} - 1]}$ and $h_i(t_{\ell + 1}) = 0$ for $i \in \{0, \cdots, L - 1\}$, one has that
	\begin{equation*}
	(\lambda_h/\bar{n})\Vert h(t_{\ell + 1})\Vert^2 \le \lambda_h\eta \bar{n}.
	\end{equation*}
	
	{\it Part b: Upper-bounding $\mathcal{T}_2$.}
	It follows recursively from \eqref{eq:iossw} that 
	\begin{align}\label{eq:ww}
	&\quad W(\xi_{t_{\ell + 1}}) - W(\xi_{t_\ell})\nonumber\\
	& \le -\frac{1}{2} \Vert \xi_{[t_{\ell}, t_{\ell + 1} - 1]}\Vert^2 + c_1 \Vert \bar{u}^*_{[0, \tau_{\ell} - 1]}(t_{\ell})\Vert^2 + c_2\Vert y_{[t_\ell, t_{\ell + 1}]}\Vert^2\nonumber\\
	& \le -\frac{1}{2} \Vert \xi_{[t_{\ell}, t_{\ell + 1} - 1]}\Vert^2 \!+\! c_1 \Vert \bar{u}^*_{[0, \tau_{\ell} - 1]}(t_{\ell})\Vert^2\!+\! 2c_2\Vert \bar{y}^*_{[0, \tau_\ell - 1]}(t_\ell)\Vert^2\nonumber\\
	&\quad  + 2c_2\Vert y_{[t_\ell, t_{\ell + 1}]} - \bar{y}^*_{[0, \tau_\ell - 1]}(t_\ell)\Vert^2. 
	\end{align}
	
	Since $\gamma  = {\min\{\underline{\lambda}_Q, \underline{\lambda}_R\}}/{\max\{c_1, 2c_2\}}$, we have that
	\begin{align}\label{eq:gammauy}
	&\quad \gamma(c_1 \Vert \bar{u}^*_{[0, \tau_{\ell} - 1]}(t_{\ell})\Vert^2\!+\! 2c_2\Vert \bar{y}^*_{[0, \tau_\ell - 1]}(t_\ell)\Vert^2)\nonumber\\
	&\le \sum_{i = 0}^{\tau_{\ell} - 1} \big(\Vert\bar{u}_i^*(t_\ell)\Vert_R^2 + \Vert\bar{y}_i^*(t_\ell)\Vert_Q^2\big).
	\end{align}
	
	Substituting \eqref{eq:ye} and \eqref{eq:gammauy} into \eqref{eq:ww}, we arrive at
	\begin{align}
	 \mathcal{T}_2 &\le -\frac{\gamma}{2}\Vert \xi_{[t_\ell, t_{\ell + 1}  - 1]}\Vert^2+ 2\bar{\lambda}_Q\sum_{i = 0}^{\tau_{\ell} - 1}\Vert h^*_i(t_\ell)\Vert^2 \nonumber\\
	&\quad+ \sum_{i = 0}^{\tau_{\ell} - 1} \big(\Vert\bar{u}_i^*(t_\ell)\Vert_R^2 + \Vert\bar{y}_i^*(t_\ell)\Vert_Q^2\big) \nonumber\\
	&\quad + \bar{\lambda}_Q\big(2 \eta\bar{n}^2 + 2 \Vert h^*_{[-\eta, -1]}(t_\ell)\Vert^2\big)\sum_{i = 0}^{\tau_\ell - 1}(\rho^{i + \eta})^2 .
	\end{align}

	
	Combining the results in {Parts a} and {b}, it holds that
	\begin{align}\label{eq:Verror}
	& V(\xi_{t_{\ell + 1}}) - V(\xi_{t_\ell}) \le -\frac{\gamma}{2}\Vert \xi_{[t_{\ell }, t_{\ell + 1} - 1]}\Vert^2 + 2\bigg[\bar{\lambda}_Q + 2\bar{\lambda}_{P_r} \nonumber\\
	&+ \lambda_g\bar{n} \Vert H_{u \xi}^\ddag\Vert^2\Big(1 + \frac{\bar{\lambda}_{P_r}}{\underline{\lambda}_{R}}\Big)\bigg]\bigg[\big(\eta\bar{n}^2 + \Vert h^*_{[-\eta, -1]}(t_{\ell})\Vert^2\big)\nonumber\\
	&\times  \sum_{i = 0}^{\tau_\ell - 1}(\rho^{\eta + i})^2 + \sum_{i = 0}^{\tau_{\ell} - 1}\Vert h^*_{i}(t_\ell)\Vert^2\bigg] + \lambda_h \eta\bar{n}^2 + 2\bigg[\bar{\lambda}_{P_r} \nonumber\\
	& + \lambda_g\bar{n} \Vert H_{u \xi}^\ddag\Vert^2\Big(1 + \frac{\bar{\lambda}_{P_r}}{\underline{\lambda}_{R}}\Big)\bigg]\Vert \bar{\xi}^*_{\tau_\ell}(t_\ell)\Vert^2.
	\end{align}
	In addition, \eqref{eq:tau} indicates that $\tau_\ell \le \check{\tau}_\ell$, and hence \eqref{eq:Verror} obeys
	\begin{align*}
	&\quad V(\xi_{t_{\ell + 1}}) - V(\xi_{t_\ell}) \\
	&\le -\Big(\frac{\gamma}{2} \!-\! \sigma\Big) \Vert \xi_{t_\ell}\Vert^2+ 2\sigma\eta\bar{n}^2 \!+\!\!\! \sum_{i = 1}^{\tau_{\ell} - 1}\!\!\Big(\sigma \Vert \bar{\xi}_{i}(t_\ell)\Vert^2 \!-\! \frac{\gamma}{2} \Vert\xi_{t_\ell + i}\Vert^2\Big)\nonumber\\
	& \le -\Big(\frac{\gamma}{2} - \sigma\Big) \Vert \xi_{t_\ell}\Vert^2 + 2\sigma\eta\bar{n}^2 +\frac{\gamma}{2}\sum_{i = 1}^{\tau_{\ell} - 1} \Vert \bar{\xi}_{e}(t_\ell + i)\Vert^2\nonumber\\
	& \le -(\gamma/2 - 2\sigma) \Vert \xi_{t_\ell}\Vert^2 + \alpha_1(\bar{n})
	\end{align*}
	where $\alpha_1(\bar{n}) := \bar{n}[J^*_L(t_{\ell})/\lambda_h + \sum_{i = 1}^{L - 1}(\rho^{\eta+i})(\eta + J^*_L(t_{\ell})/\lambda_h)] + 2\sigma\eta\bar{n}^2$ is a $\mathcal{K}_{\infty}$ function, and  $\bar{\sigma} \le \gamma /4$.
	Moreover, according to \cite[Lemma 1]{berberich2019data},  the Lyapunov function satisfies $\underline{\lambda}_{P_{\xi}}\Vert \xi_{t_\ell} \Vert^2 \le V(\xi_{t_\ell}) \le \gamma_1 \Vert \xi_{t_\ell} \Vert^2 + \alpha_2(\bar{n})$ for every $\ell \in \mathbb{N}_0$ and some constant $\gamma_1 >0$ with function $\alpha_2(\cdot) \in \mathcal{K}_{\infty}$.
	Hence, we have that
	\begin{align*}
	V(\xi_{t_{\ell + 1}})& \le \underbrace{\Big(1-\frac{\gamma- 4\sigma}{2\gamma_1}\Big)}_{\stackrel{\triangle}{=}\gamma_2<1}V(\xi_{t_\ell}) + \underbrace{\frac{\gamma- 4\sigma}{2\gamma_1}\alpha_2(\bar{n}) + \alpha_1(\bar{n})}_{\stackrel{\triangle}{=}\alpha_3(\bar{n}) \in \mathcal{K}_{\infty}}.
	\end{align*}
	In conclusion, the Lyapunov function decreases at every self-triggered time. Since problem \eqref{eq:ddmpc} is feasible at $t_0$, it is feasible at $t_{\ell}$ for all $\ell \in \mathbb{N}_0$.
\end{proof}

	\begin{remark}
			Condition \eqref{eq:tautau1} ensures that problem \eqref{eq:ddmpc} is feasible at $t_{\ell + 1}$ if it is feasible at $t_\ell$, which is also known as \emph{recursive feasibility}.
			In addition, condition \eqref{eq:tautau2} guarantees the Lyapunov function of system \eqref{eq:sys} decreases along the self-triggered times.
			Therefore, the self-triggering law reduces the communication load while ensuring the recursive feasibility of  \eqref{eq:ddmpc}. 
			{Similar to model-based self-triggered control,
			the choice of the contraction factor $\sigma \in (0,1)$ determines the trade-off between having a better control performance and incurring less transmissions.}
			The smaller $\sigma$, the better the system performance. 
	\end{remark}
	
	Merging the data-driven MPC and self-triggering schemes in Lemma \ref{lem:tau}, our data-driven self-triggered controller is presented in Alg. \ref{alg:ddmpc}, with stability guarantees provided below.
	\begin{algorithm}[h]
		\caption{Data-driven Self-triggered Predictive Control.}
		\label{alg:ddmpc}
		\begin{algorithmic}[1]
			\STATE {\bfseries Input:} 
			Prediction window $L \ge  {\eta} + 1$; coefficients $R \succ 0$, $Q \succ 0$, $P \succ 0$, $\lambda_g > 0$, $\lambda_h > 0$, $\sigma \in (0, 1)$, $\epsilon > 0$ and $r > 0$ satisfying \eqref{eq:rleepsilon}; noise bound $\bar{n}$;
			data $\{u^p_{[0, N - 1]}, y^p_{[0, N - 1]}\}$ of \eqref{eq:sys} from initial condition $x_0^{p}$, 
			with 
			 $u^p_{[0, N - 1]}$  persistently exciting of order $L + \eta + 1$.
			\STATE {\bfseries Construct} Hankel matrices for the input,  output, and the extended state trajectories, i.e., $H_{L + \eta}(u^p)$, $H_{L + \eta}(y^p)$, and $H_{u\xi}:=[H_{L + \eta}^\top(u^p)~ H_1^\top(\xi^p_{N - L - \eta + 1})]^\top$.
			\STATE {\bfseries Compute} $\rho^i$ for $i \in \{1,2, \ldots, L - 1\}$ from \eqref{eq:rho}.
			\STATE {\bfseries If} $t = t_\ell$, do \label{alg:ddmpc1}
			\STATE 
			\begin{itemize}
				\item [] Use the past $\eta$ measurements, i.e.,  $u_{[t - \eta, t - 1]}$ and $\zeta_{[t - \eta, t - 1]}$, to solve problem \eqref{eq:ddmpc}.\\
				Determine the next self-triggered time $t_{\ell + 1}$ such that the inter-triggering time obeys \eqref{eq:tau}--\eqref{eq:tautau2}.\\
				Set $u_t = \bar{u}^*_0(t)$.\\
				Set $t_{\ell} = t_{\ell + 1}$
			\end{itemize}
			\STATE {\bfseries Else if} $t \ne t_\ell$
			\STATE \quad \quad Set $u_t = \bar{u}^*_{t - t_\ell}(t_\ell)$.\label{alg:ddmpcu1}
			\STATE {\bfseries End if}
			\STATE {\bfseries Set} $t = t + 1$ and go back to \ref{alg:ddmpc1}.
		\end{algorithmic}
	\end{algorithm}
	
	\begin{theorem}
		Let Assumptions \ref{as:ctrl}---\ref{as:horizon} hold.
		Consider system \eqref{eq:sys} whose control input is generated by solving problem \eqref{eq:ddmpc} at each self-triggered time $t_\ell$.
		If the sequence $\{t_\ell\}_{\ell \in \mathbb{N}_0}$ satisfies the conditions in Lemma \ref{lem:tau}, then system \eqref{eq:sys} achieves practical exponential stability under the controller \eqref{eq:ddmpc}.
	\end{theorem}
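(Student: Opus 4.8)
The plan is to build directly on the one-step Lyapunov contraction already established in the proof of Lemma \ref{lem:tau}. That proof shows, for the Lyapunov function $V(\xi_t) = J^*_L(\xi_t) + \gamma W(\xi_t)$, that at consecutive triggering instants
\begin{equation*}
V(\xi_{t_{\ell+1}}) \le \gamma_2\, V(\xi_{t_\ell}) + \alpha_3(\bar{n}),\qquad \gamma_2 \in (0,1),\ \alpha_3 \in \mathcal{K}_\infty,
\end{equation*}
together with the sandwich bounds $\underline{\lambda}_{P_\xi}\Vert \xi_{t_\ell}\Vert^2 \le V(\xi_{t_\ell}) \le \gamma_1 \Vert \xi_{t_\ell}\Vert^2 + \alpha_2(\bar{n})$ for some $\gamma_1 > 0$ and $\alpha_2 \in \mathcal{K}_\infty$. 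These two facts are the entire engine of the argument; what remains is to unroll the recursion and to pass from the triggering-indexed estimate to a genuine time-indexed one.

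First I would iterate the contraction from $t_0$ to obtain the closed-form bound
\begin{equation*}
V(\xi_{t_\ell}) \le \gamma_2^{\ell}\, V(\xi_{t_0}) + \frac{\alpha_3(\bar{n})}{1 - \gamma_2},\qquad \ell \in \mathbb{N}_0,
\end{equation*}
and then apply the sandwich bounds to convert it into an estimate on $\Vert \xi_{t_\ell}\Vert^2$ comprising an exponentially decaying term in $\ell$ plus a noise-dependent residual. To render this an honest statement about elapsed time rather than event count, I would invoke the self-triggering rule \eqref{eq:tau}, which guarantees $\tau_\ell \le L-1$ for every $\ell$; hence $t_\ell \le \ell(L-1)$, i.e. $\ell \ge t_\ell/(L-1)$, so $\gamma_2^\ell \le \gamma_2^{\,t_\ell/(L-1)}$ decays exponentially in $t_\ell$. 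This yields $\Vert \xi_{t_\ell}\Vert \le \beta(\Vert \xi_{t_0}\Vert, t_\ell) + \alpha(\bar{n})$ at the triggering instants, with $\beta \in \mathcal{KL}$ of exponential type and $\alpha \in \mathcal{K}_\infty$.

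The main obstacle is extending this estimate from the triggering instants to every intermediate time $t \in [t_\ell, t_{\ell+1})$, where the plant runs open-loop under the precomputed inputs $\bar{u}^*_{t-t_\ell}(t_\ell)$ and the Lyapunov decrease is not directly certified. Here I would exploit that the interval length is at most $L-1$ and that the extended dynamics \eqref{eq:extsys} are linear and time-invariant: propagating $\xi_{t_\ell}$ forward at most $L-1$ steps under $(\tilde{A}, \tilde{B})$ with the applied inputs (which are themselves bounded in terms of $\Vert \xi_{t_\ell}\Vert$ via the optimal cost $J^*_L$), and absorbing the prediction error through Lemma \ref{lem:outputerror}, gives a uniform bound $\Vert \xi_{t}\Vert \le C\Vert \xi_{t_\ell}\Vert + c'\bar{n}$ with constants $C, c'$ depending only on $(\tilde{A}, \tilde{B})$, $L$, and $\eta$. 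The care needed is that these constants must be uniform over all intervals, so that the additive noise term stays bounded and the multiplicative factor does not accumulate across events.

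Finally, chaining the inter-sample growth bound with the triggering-instant estimate produces
\begin{equation*}
\Vert \xi_t\Vert \le \tilde{\beta}(\Vert \xi_0\Vert, t) + \tilde{\alpha}(\bar{n}),\qquad t \in \mathbb{N}_0,
\end{equation*}
with $\tilde{\beta} \in \mathcal{KL}$ decaying exponentially and $\tilde{\alpha} \in \mathcal{K}_\infty$, which is precisely practical exponential stability of the extended system. Since $(C,A)$ is observable by Assumption \ref{as:ctrl} and $\xi_t$ collects the last $\eta$ input-output pairs, the original state $x_t$ is recovered from $\xi_t$ by a bounded linear map, so the same $\mathcal{KL}+\mathcal{K}$ bound transfers to $x_t$, completing the proof.
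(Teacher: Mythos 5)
Your proposal is correct and rests on the same engine as the paper's own proof: iterating the one-step contraction $V(\xi_{t_{\ell+1}}) \le \gamma_2 V(\xi_{t_\ell}) + \alpha_3(\bar{n})$ from Lemma \ref{lem:tau} and then applying the sandwich bounds $\underline{\lambda}_{P_{\xi}}\Vert \xi_{t_\ell}\Vert^2 \le V(\xi_{t_\ell}) \le \gamma_1 \Vert \xi_{t_\ell}\Vert^2 + \alpha_2(\bar{n})$ to get an exponentially decaying estimate on $\Vert \xi_{t_\ell}\Vert^2$ plus a noise-dependent residual. The paper stops at that event-indexed estimate and declares practical exponential stability, whereas your three additional steps --- converting event count $\ell$ to elapsed time via $\tau_\ell \le L-1$, bounding the open-loop growth on $[t_\ell, t_{\ell+1})$ with constants uniform over intervals, and recovering $x_t$ from $\xi_t$ through observability --- are sound refinements of details the paper leaves implicit, not a different route.
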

	
	\begin{proof} 
		Recall from Lemma \ref{lem:tau} that it holds for any $t_\ell$
			\begin{equation}\label{eq:invariantV}
			V(\xi_{t_{\ell + 1}}) \le \gamma_2 V(\xi_{t_\ell}) + \alpha_3(\bar{n}).
			\end{equation}
			Recursively, it can be deduced that
			\begin{equation}\label{eq:Vtell+1V0}
			V(\xi_{t_\ell}) \le \gamma_2^{\ell}V(\xi_{t_0}) + \alpha_3(\bar{n})\sum_{i = 0}^{\ell - 1}\gamma_2^i.
			\end{equation} 
			In addition, noticing that $\underline{\lambda}_{P_{\xi}}\Vert \xi_{t_\ell}\Vert^2 \le V(\xi_{t_\ell}) \le \gamma_1 \Vert \xi_{t_\ell} \Vert^2 + \alpha_2(\bar{n})$, we arrive at
			\begin{equation*}
			\Vert \xi_{t_{\ell}}\Vert^2 \le \frac{\gamma_1}{\underline{\lambda}_{P_{\xi}}}\gamma_2^\ell\Vert \xi_{t_0}\Vert^2 + \Big(\alpha_2(\bar{n})\gamma_2^\ell + \alpha_3(\bar{n})\sum_{i = 0}^{\ell - 1}\gamma_2^i\Big)
			\end{equation*}
			which verifies the system's practical exponential stability.
	\end{proof}
	
	\section{Data-driven Self-triggered Predictive Control for State Feedback Systems}\label{sec:stateddmpc}
	{In this section, we consider a special case of the system in \eqref{eq:sys} having  $C = I$ and $D = 0$; that is, when input-state data become available.
	In this setup, a state feedback controller is employed.
	Although simpler than the MPC scheme, the state feedback controller does not provide a predicted system trajectory, which makes design of the self-triggering mechanism difficult.
	To address this issue, a data-based norm maximization problem is put forth such that a future system trajectory can be predicted from the input-state data.
	The key idea behind constructing a predicted trajectory for our self-triggered state feedback controller is combining the results of Lemmas \ref{lem:fundamental} and  \ref{lem:outputerror}. 
Therefore, the trajectory prediction along with the self-triggering mechanism here builds on the idea of \eqref{eq:ddmpc}.
}

%
	Consider the following state feedback controller
	\begin{subequations}\label{eq:statesys}
		\begin{align}
		x_{t+1} &= A x_{t} + B u_{t}, \quad &t& \in \mathbb{N}_0\\
		u_{t} &= K {\zeta}_{t_{\ell}}, \quad &t_\ell& \le t < t_{\ell + 1}\\
		\zeta_{t_\ell} &= x_{t_\ell} + n_{t_\ell}
		\end{align}
	\end{subequations}
	where the gain matrix $K \in \mathbb{R}^{n_u \times n_x}$ can be obtained via a data-based linear matrix inequality method in e.g. \cite{persis2020data}.
	At each self-triggered time $t_{\ell}$, the state $x_{t_\ell}$ is sampled and sent to the controller.
	Similar as in Section \ref{sec:outputddstmpc}, the transmission is corrupted by noise $n_t$, and hence the controller receives $\zeta_{t_\ell}$.
	The control input $u_{t_\ell}$ is sent to the plant, and kept the same within the interval $[t_\ell, t_{\ell + 1})$. See Fig. \ref{fig:statesys} for an illustration.
	\begin{figure}
		\centering
		\includegraphics[width=8cm]{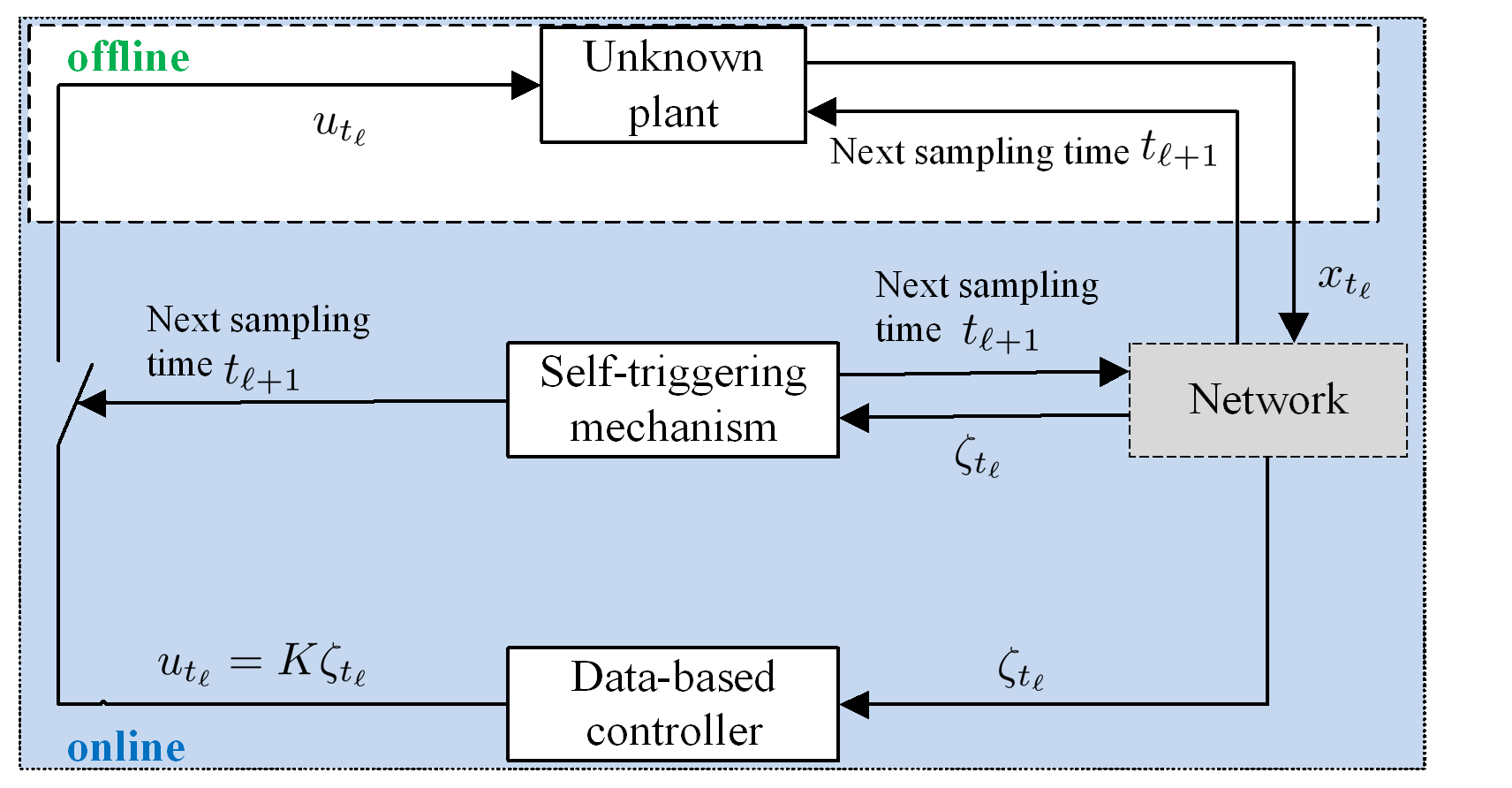}\\
		\caption{Data-driven self-triggered state feedback controller for \eqref{eq:statesys}.
		}\label{fig:statesys}
		\centering
	\end{figure}	
Instead of input-output data in Section \ref{sec:outputddstmpc}, input-state data are available in this section, and hence Assumptions \ref{as:sys} and \ref{as:multitre} are replaced as follows.
	\begin{assumption}[\emph{Unknown system model}]\label{as:statesys}
	The system matrices $(A, B)$ in \eqref{eq:statesys} are unknown, but some pre-collected input-state data $\{u^p_t, x^p_t\}_{t = 0}^{N - 1}$ are available.
\end{assumption} \begin{assumption}[{\emph{Pre-collected data}}]\label{as:statetre}
		Let $\{u^p_t\}_{t = 0}^{N - 1}$ be any sequence persistently exciting of order $L + n_x + 1$, where  $L \ge 2$.
		The state sequence  $\{x^p_{t}\}_{i = 0}^{N - 1}$ is generated by system \eqref{eq:statesys} offline using the input sequence $\{u^p_t\}_{t = 0}^{N - 1}$, from any initial condition $x^{p}_0$.
	\end{assumption}

	Different from the (data-driven) MPC, (data-driven) state-feedback control does not produce a system trajectory.  Hence, we resort to the following problem to predict the states for ensuing $L$ times at each triggered time $t_\ell$
	\begin{subequations}\label{eq:statempc}
		\begin{align}
		J^*_L(x_{t_\ell}) :=\underset{g(t_\ell), h(t_\ell)\atop
			\bar{x}_i(t_\ell)}{\max}
		&\sum_{i = 0}^{L - 1}\Vert \bar{x}_i(t_{\ell})\Vert_{\infty} \label{eq:statempc0}\\
		{\rm s.t.}\quad 
		&\left[
		\begin{matrix}
		u(t_\ell)\\
		\bar{x}(t_\ell)
		\end{matrix}
		\right] = 
		\left[
		\begin{matrix}
		H_{L}(u^p)\\
		H_{L}(x^p)
		\end{matrix}
		\right] g(t_\ell) \label{eq:statempc1}\\
		& \,\bar{x}_{0}(t_\ell) + h(t_\ell) = \zeta_{t_\ell} \label{eq:statempc2}\\
		& \, \Vert h(t_\ell)\Vert \le \bar{n} \label{eq:statempc3}
		\end{align}
	\end{subequations}
	where $\bar{x}(t_\ell) = [\bar{x}_{0}^\top(t_\ell)~ \cdots~ \bar{x}^\top_{L - 1}(t_\ell)]^{\top} \in \mathbb{R}^{n_xL}$ collects the predicted states for times from $t_\ell$ to $t_\ell + L -1$, $u(t_\ell) = [u_{t_\ell}^\top~ \cdots~ u_{t_\ell}^\top]^{\top} \in \mathbb{R}^{n_uL }$ repeats the control input $u_{t_\ell}$ for all times in $[t_\ell, t_\ell + L -1]$, and $h(t_\ell) \in \mathbb{R}^{n_x}$ is a slack variable mitigating the noise $n_t$.
	 Upon solving \eqref{eq:statempc}, the optimal state $\bar{x}^*(t_\ell)$  predicts the worst  (in terms of norm) state trajectory under noise $n_{t}$ and control inputs $u(t_\ell)$, which is subsequently used to characterize the actual future state following the same procedure as in Section \ref{sec:outputddstmpc}.
	Specifically, capitalizing on  $\bar{x}^*(t_\ell)$, the self-triggering mechanism is constructed as follows
	\begin{equation}\label{eq:selftri}
	\left\{
	\begin{aligned}
	t_{\ell + 1} &:= t_{\ell} + \min{\{L - 1, \tau_{\ell}\}},~~ t \in \mathbb{N}_{0};~~ t_0 := 0\\
	\tau_{\ell} &:= \min\{\tau\in \mathbb{N}: \varphi(\bar{x}_{\tau}(t_\ell),\, \zeta_{t_\ell},\, \rho^{\tau}\!,\,\bar{n},\, \tau) > \sigma \Vert x_{t_\ell}\Vert_{\infty}\}
	\end{aligned}
	\right.
	\end{equation}
	where $\sigma \in(0,1)$ is a threshold parameter, $\rho^{\tau_\ell} := \min\{\rho| \rho >= A^{\tau_{\ell}} \}$ for $\tau = 1, \cdots, L - 1$ is defined in Lemma \ref{lem:outputerror} with $C = I$, and $\varphi:\mathbb{R}^{n_x} \times \mathbb{R}^{n_x} \times \mathbb{R}\times \mathbb{R}\times \mathbb{N} \rightarrow \mathbb{R}_{> 0}$ is a certain function to quantify the error induced by the self-triggered sampling.
The error between the states at the current triggered time $t_\ell$ and its next one satisfies
	\begin{equation}\label{eq:xtell-xtell+1}
	\Vert \zeta_{t_{\ell}} - x_{t_{\ell+ 1}}\Vert_{\infty} \le \Vert x_{t_{\ell}} - \bar{x}^*_{\tau_{\ell}}(t_{\ell})\Vert_{\infty}+ \Vert x_{t_{\ell + 1}} - \bar{x}^*_{\tau_{\ell}}(t_{\ell})\Vert_{\infty} + \bar{n}.
	\end{equation}
	It follows from Lemma \ref{lem:outputerror} that the error between the predicted state and the actual state 
can be bounded by
	\begin{align}\label{eq:barxtell+1-xtell+1}
	\Vert x_{t_{\ell + 1}}- \bar{x}^*_{\tau_{\ell}}(t_{\ell})\Vert_{\infty} \le \rho^{\tau_\ell}(\bar{n} +\Vert h^*(t_{\ell}) \Vert_{\infty}).
	\end{align}
	Substituting \eqref{eq:barxtell+1-xtell+1} into \eqref{eq:xtell-xtell+1}, the self-triggering function $\varphi(\cdot)$ in \eqref{eq:selftri} can be given by
	\begin{align}\label{eq:phi}
	\varphi(\bar{x}(t_\ell),&~ h(t_\ell), \zeta_{t_\ell}, \rho^\tau, \bar{n}, \tau)\\
	& := \Vert x_{t_\ell} -\bar{x}_{\tau}(t_{\ell})\Vert_{\infty} + \rho^{\tau_\ell}(\bar{n} +\Vert h(t_{\ell}) \Vert_{\infty}) + \bar{n} \nonumber.
	\end{align}
	
	Based on \eqref{eq:selftri}, \eqref{eq:barxtell+1-xtell+1}, and \eqref{eq:phi}, the following stability result comes ready whose proof is similar to that of the model-based self-triggered control in \cite{Wakaiki2021selftrigger} and is thus omitted here.
	\begin{lemma}
		Let Assumptions \ref{as:ctrl}, \ref{as:statesys}, and \ref{as:statetre} hold.
		Consider the state feedback system in \eqref{eq:statesys}.
		If the self-triggered times satisfy \eqref{eq:selftri} with function $\varphi(\cdot)$ defined by \eqref{eq:phi}, and the parameters $h^*(t_\ell)$, $\bar{x}^*(t_\ell)$ obtained by solving problem \eqref{eq:statempc}, then system \eqref{eq:statesys} achieves input-to-state stability under the proposed self-triggering mechanism.
	\end{lemma}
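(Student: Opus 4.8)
The plan is to follow the Lyapunov-based input-to-state stability (ISS) argument that is standard for self-triggered control, as in \cite{Wakaiki2021selftrigger}, adapting it to the present data-driven closed loop. First I would rewrite the inter-sample dynamics: over each interval $[t_\ell, t_{\ell + 1})$ the input is frozen at $u_t = K\zeta_{t_\ell}$, so that $x_{t + 1} = (A + BK)x_t + BK(\zeta_{t_\ell} - x_t)$. Since the data-based LMI design of \cite{persis2020data} returns a gain $K$ rendering $A + BK$ Schur stable, there exist $P = P^\top \succ 0$ and $Q \succ 0$ solving $(A + BK)^\top P (A + BK) - P = -Q$; I would adopt $V(x) = x^\top P x$ as the ISS-Lyapunov candidate, which already obeys the sandwich bound $\underline{\lambda}_P \Vert x\Vert^2 \le V(x) \le \overline{\lambda}_P \Vert x\Vert^2$.

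The second step is to convert the triggering rule into a bound on the actual sampling error $\zeta_{t_\ell} - x_t$. By construction, \eqref{eq:selftri} selects $\tau_\ell$ as the first step at which $\varphi(\cdot)$ exceeds $\sigma\Vert x_{t_\ell}\Vert_\infty$, so throughout the interval the predicted error satisfies $\varphi(\cdot) \le \sigma\Vert x_{t_\ell}\Vert_\infty$. Combining the decomposition \eqref{eq:xtell-xtell+1} with the prediction-error bound \eqref{eq:barxtell+1-xtell+1} --- the specialization of Lemma \ref{lem:outputerror} to $C = I$ --- shows that the true error $\Vert\zeta_{t_\ell} - x_t\Vert_\infty$ is dominated by $\sigma\Vert x_{t_\ell}\Vert_\infty$ plus a term proportional to $\bar{n}$, stemming from the slack $h^*(t_\ell)$ (bounded via \eqref{eq:statempc3}) and the measurement noise. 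In Lyapunov coordinates this furnishes an estimate of the form $\Vert\zeta_{t_\ell} - x_t\Vert \le c_\sigma \Vert x_{t_\ell}\Vert + c_n \bar{n}$ with $c_\sigma$ proportional to $\sigma$.

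Third, I would establish the trigger-to-trigger contraction. Evaluating $V$ along the perturbed dynamics, invoking the Lyapunov equation, and splitting the cross term by Young's inequality, the nominal decrease $-x_t^\top Q x_t$ absorbs the error-induced positive terms whenever $\sigma$ is chosen small enough, yielding $V(x_{t_{\ell + 1}}) \le \gamma_2 V(x_{t_\ell}) + \alpha(\bar{n})$ for some $\gamma_2 \in (0,1)$ and $\alpha \in \mathcal{K}_\infty$. Iterating this inequality exactly as in the proof of the output-feedback Theorem, and re-using the sandwich bound, produces $\Vert x_{t_\ell}\Vert^2 \le \beta(\Vert x_{t_0}\Vert, \ell) + \theta(\bar{n})$ with $\beta \in \mathcal{KL}$ and $\theta \in \mathcal{K}_\infty$; since $\tau_\ell \le L - 1$ and $\Vert A + BK\Vert$ is finite, the inter-sample growth is uniformly bounded, so the estimate extends from the triggering instants to every $t$, which is precisely ISS with respect to $n_t$.

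The main obstacle I expect lies in the second step: because $\varphi$ is built on the worst-case maximizer $\bar{x}^*(t_\ell)$ of the norm-\emph{maximization} problem \eqref{eq:statempc} rather than on the true state, one must verify that this maximizer genuinely over-estimates the actual sampling error, so that the triggering inequality is conservative in the safe direction and the derived bound remains valid for the real trajectory. Unlike the continuous-time setting of \cite{Wakaiki2021selftrigger}, Zeno behaviour is automatically excluded here since time is discrete and $\tau_\ell \ge 1$; the remaining care therefore concerns well-posedness of \eqref{eq:selftri} and ensuring that the single parameter $\sigma$ is simultaneously small enough for the error-induced terms to be dominated by $-x_t^\top Q x_t$ and large enough to permit nontrivial inter-sample times, all while keeping the $\bar{n}$-dependent residuals of class $\mathcal{K}_\infty$.
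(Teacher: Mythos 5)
Your proposal is correct and follows essentially the paper's own route: the paper in fact omits this proof, stating only that it is similar to the model-based self-triggered ISS argument of the cited Wakaiki reference, and your reconstruction---frozen-input closed loop, trigger-derived sampling-error bound obtained by combining \eqref{eq:selftri} with \eqref{eq:xtell-xtell+1} and \eqref{eq:barxtell+1-xtell+1}, trigger-to-trigger Lyapunov contraction for sufficiently small $\sigma$, and extension from triggering instants to all times via $\tau_\ell \le L-1$---is precisely that argument instantiated in the data-driven setting. The worst-case subtlety you flag also resolves in the safe direction: the actual trajectory (with slack $h(t_\ell) = n_{t_\ell}$, $\Vert n_{t_\ell}\Vert \le \bar{n}$) is feasible for \eqref{eq:statempc}, and the prediction-error bound \eqref{eq:barxtell+1-xtell+1} holds for any feasible point and hence at the maximizer, so $\varphi$ genuinely over-estimates the true sampling error.
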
 

\begin{remark}[{\emph{Novelty}}]
	The closest references to the results in this section are \cite{Wildhagen2021data} and \cite{Wang2021data}, yet they distinguish from the  present work considerably. 
First, the work \cite{Wildhagen2021data} is devoted to estimating the maximum sampling interval for state feedback systems under different levels of process noise and controller gains.
	Therefore, they only use the pre-collected data to estimate upper bounds of $\Vert A^i \Vert$ for all $i \in \{1, \cdots,\bar{h}\}$.
	This plays a similar role as the norm maximization problem \eqref{eq:rho}. 
	However, the key idea of our self-triggering mechanism is to construct a future state trajectory using the most recently received state, and \eqref{eq:rho} offers a way for estimating one of the parameters required for constructing the trajectory.
	In addition, although the work \cite{Wang2021data} considers the self-triggered control problem by generalizing the estimation method in \cite{Wildhagen2021data}, its system model as well as associated self-triggered controller design are completely different from ours. 	
\end{remark}

	\section{Numerical Examples}
	To validate the effectiveness of our proposed data-driven self-triggering mechanism as well as predictive controllers, several numerical examples are provided in this section. 
	
	\subsection{Output feedback system}\label{sec:num:output}
	Consider a linearized version of the four-tank system in, e.g., \cite{Raff2006Nonlinear,berberich2019data}.
	The system matrices are given by 
	\begin{align*}
	&A = \left[
	\begin{matrix}
	0.927 & 0 & 0.041 & 0\\
	0 & 0.918 & 0 & 0.033\\
	0 & 0 & 0.924 & 0\\
	0 & 0 & 0 & 0.937
	\end{matrix}
	\right], 
	B = \left[
	\begin{matrix}
	0.017 & 0.001\\
	0.001 & 0.023\\
	0 & 0.061\\
	0.072 & 0
	\end{matrix}
	\right]\\
	&C = \left[
	\begin{matrix}
	1 & 0 & 0 & 0\\
	0 & 1 & 0 & 0
	\end{matrix}
	\right], \quad D = 0.
	\end{align*}
	with the observability index $\eta = 2$.
	To this end, an input-output trajectory of length $N = 800$ was obtained by means of simulating the open-loop system off-line using persistently exciting input sequence.
	Let the network-induced noise $n_t$ satisfies $\bar{n} = 0.0015$.
	Parameters of Alg. \ref{alg:ddmpc} were set as follows: the prediction horizon $L = 11$, cost matrices $Q = I$, $R = 8 \cdot 10^{-3}I$, coefficients $\lambda_g\bar{n} = 10^{-6}$, $\lambda_h/\bar{n} = 500$ the input constraint set $\mathbb{U} = [-2, 2]^2$, and the input-output equilibrium $(u^e, y^e) = ([1,1]^\top, [0.65,0.77])^\top$.
	Over a simulation horizon of $200$ time steps, Fig. \ref{fig:yt} depicts the output trajectory of system using data-driven self-triggering MPC scheme with $\sigma = 0.88$.
	Evidently, 
	 the system converges to the setpoint with $37$ measurement packets transmitted to the controller.
	In addition, since $\eta = 2$,  at most $2\times 37 = 74$ outputs are sent based on the packetized transmission protocol in Section \ref{sec:preliminaries:ncs}, which is also much lesser than $200$, and illustrates the effectiveness of the self-triggering mechanism.
	Figs. \ref{fig:compareJ} and \ref{fig:comparesigma4} compare the cost function $J_L^*(t_\ell)$ and the self-triggered times for different $\sigma$ values. 
	\begin{figure}
		\centering
		\includegraphics[width=9cm]{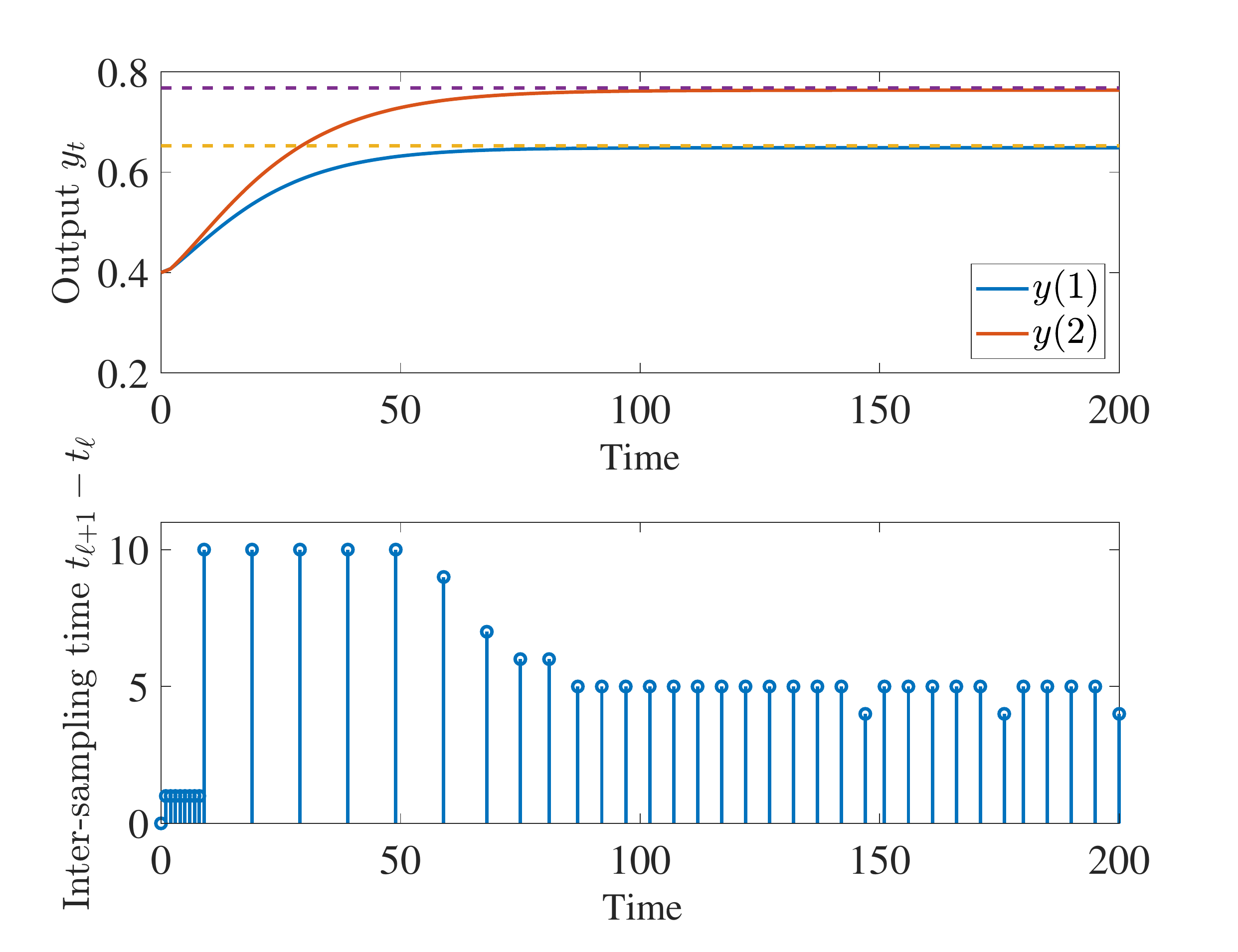}\\
		\caption{Output trajectory (top) and inter-sampling time $t_{\ell + 1} - t_{\ell}$ (bottom) with $\sigma = 0.88$.}\label{fig:yt}
		\centering
	\end{figure}
	\begin{figure}
		\centering
		\includegraphics[width=9cm]{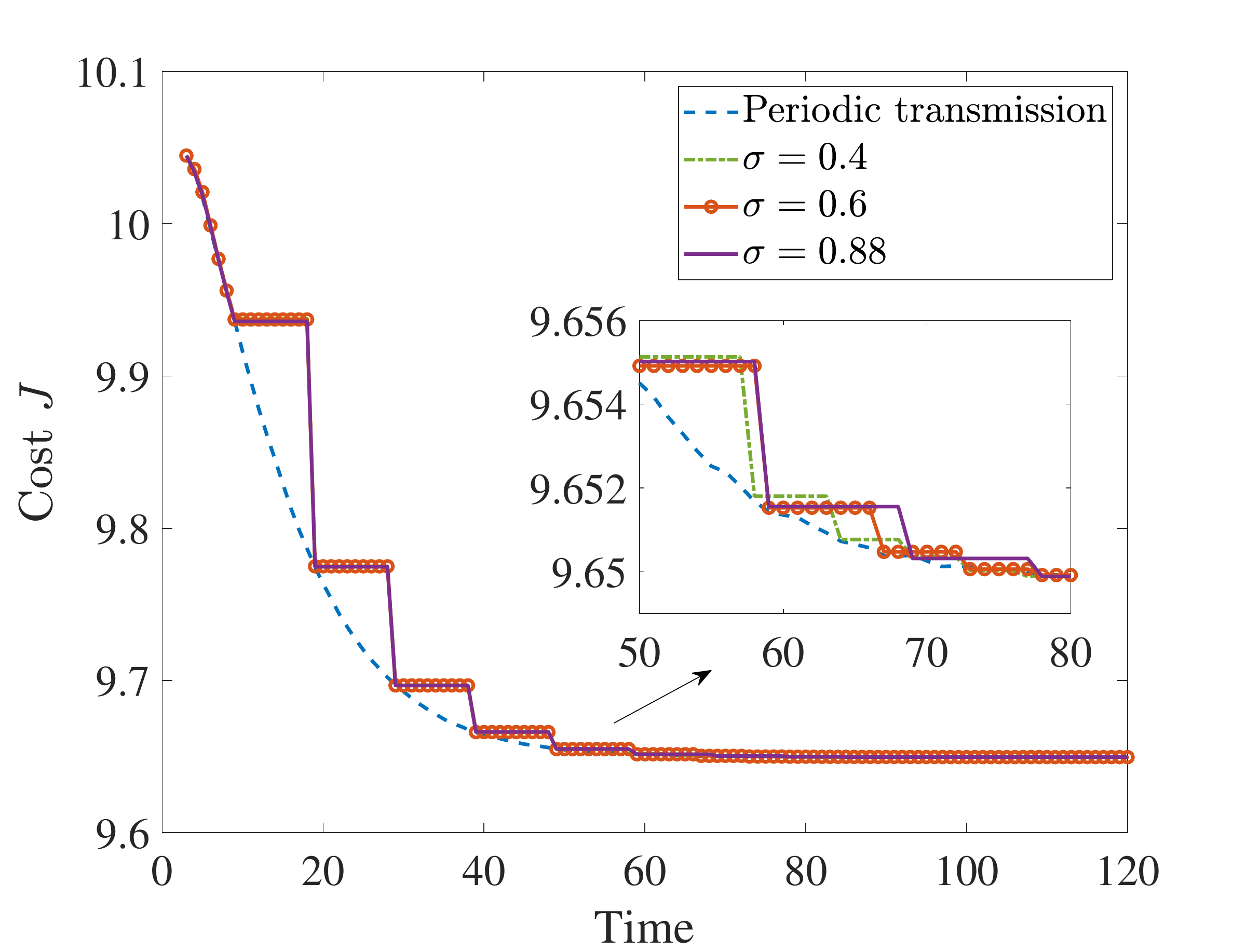}\\
		\caption{Optimal cost $J^*_L$ with different $\sigma$.}\label{fig:compareJ}
		\centering
	\end{figure}
	\begin{figure}
		\centering
		\includegraphics[width=9cm]{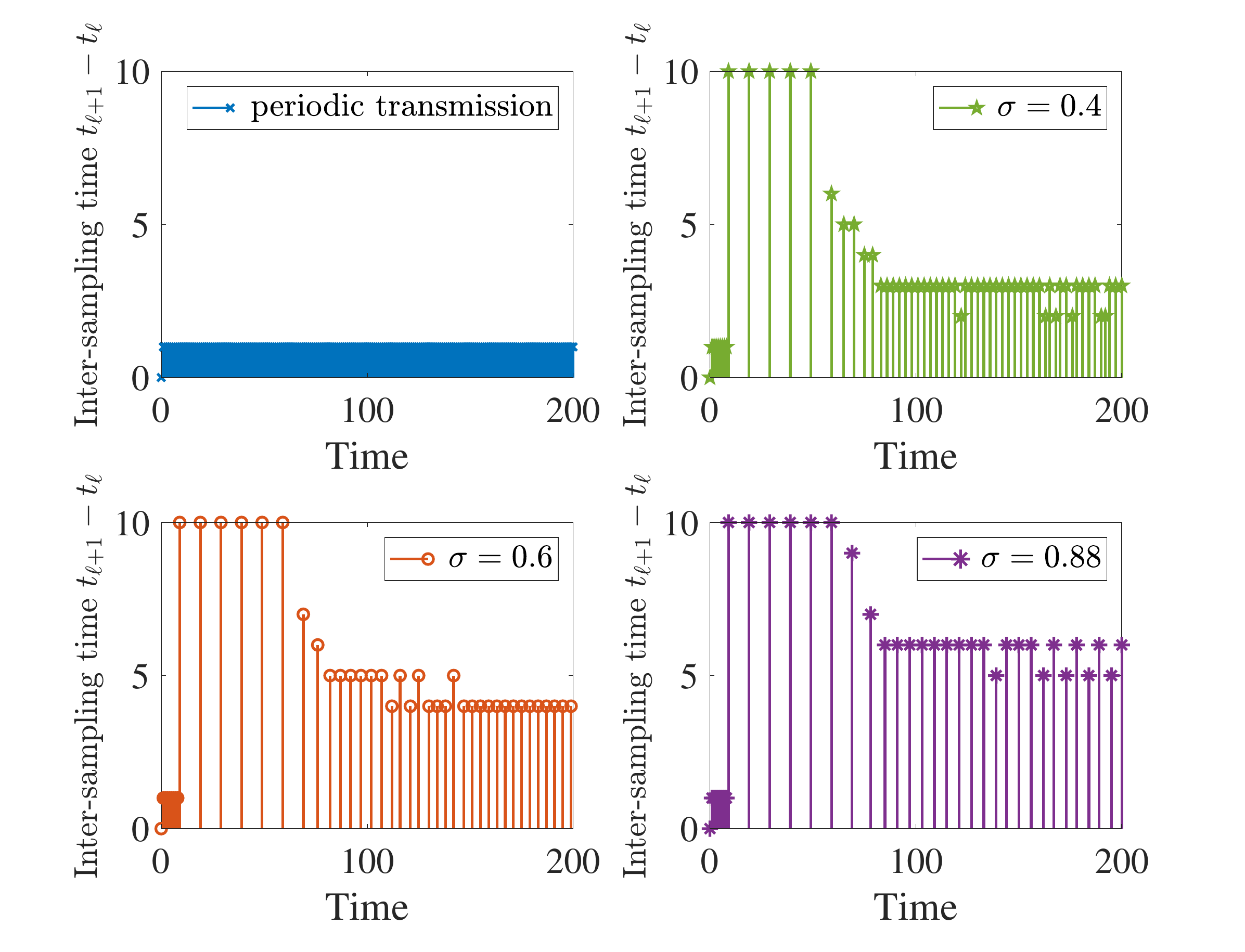}\\
		\caption{Inter-sampling time $t_{\ell + 1} - t_\ell$ with different $\sigma$.}\label{fig:comparesigma4}
		\centering
	\end{figure}	
	\subsection{State feedback systems}\label{sec:num:state}
	In this part, we consider state feedback systems and illustrate the effectiveness of the self-triggering mechanism in Section \ref{sec:stateddmpc}.
	To guarantee the system performance as done in the model-based event-triggered case e.g., \cite{peng2013anovel},  the following $\mathcal{L}_2$-like condition is incorporated and examined  
	\begin{equation}
	\kappa \sum_{i = 1}^{\tau_\ell} \Vert\bar{x}^*_i(t_\ell)\Vert \le \mu \tau_\ell\bar{n}
	\end{equation}
	where constants $\kappa>0$ and $\mu >0$ balance between achieving a better system performance and incurring less transmissions.

	\subsubsection{Example 1}
	Consider the system \cite{Wang2021data}
	\begin{equation}
	\dot{x}(t) = \left[
	\begin{matrix}
	0 & 1\\
	0 &-0.1
	\end{matrix}
	\right] x(t) + 
	\left[
	\begin{matrix}
	0\\
	0.1
	\end{matrix}
	\right] u(t),~~ t \ge 0.
	\end{equation} 
	Here, we consider a discrete-time version of this system with a sampling period of $0.1s$.
	Let $\sigma = 0.27$, $\kappa = 0.1$, $\mu = 200$, and  use the initial state $x_0 = [3~-2]^\top$.
	System performance is depicted in Fig. \ref{fig:sigma027}, where only $14$ out of $200$ samples are needed.
	This outperforms the result in the previous work \cite{Wang2021data}, which uses $15$ measurements under noise bounded by $10^{-4}$.
	\begin{figure}
		\centering
		\includegraphics[width=9cm]{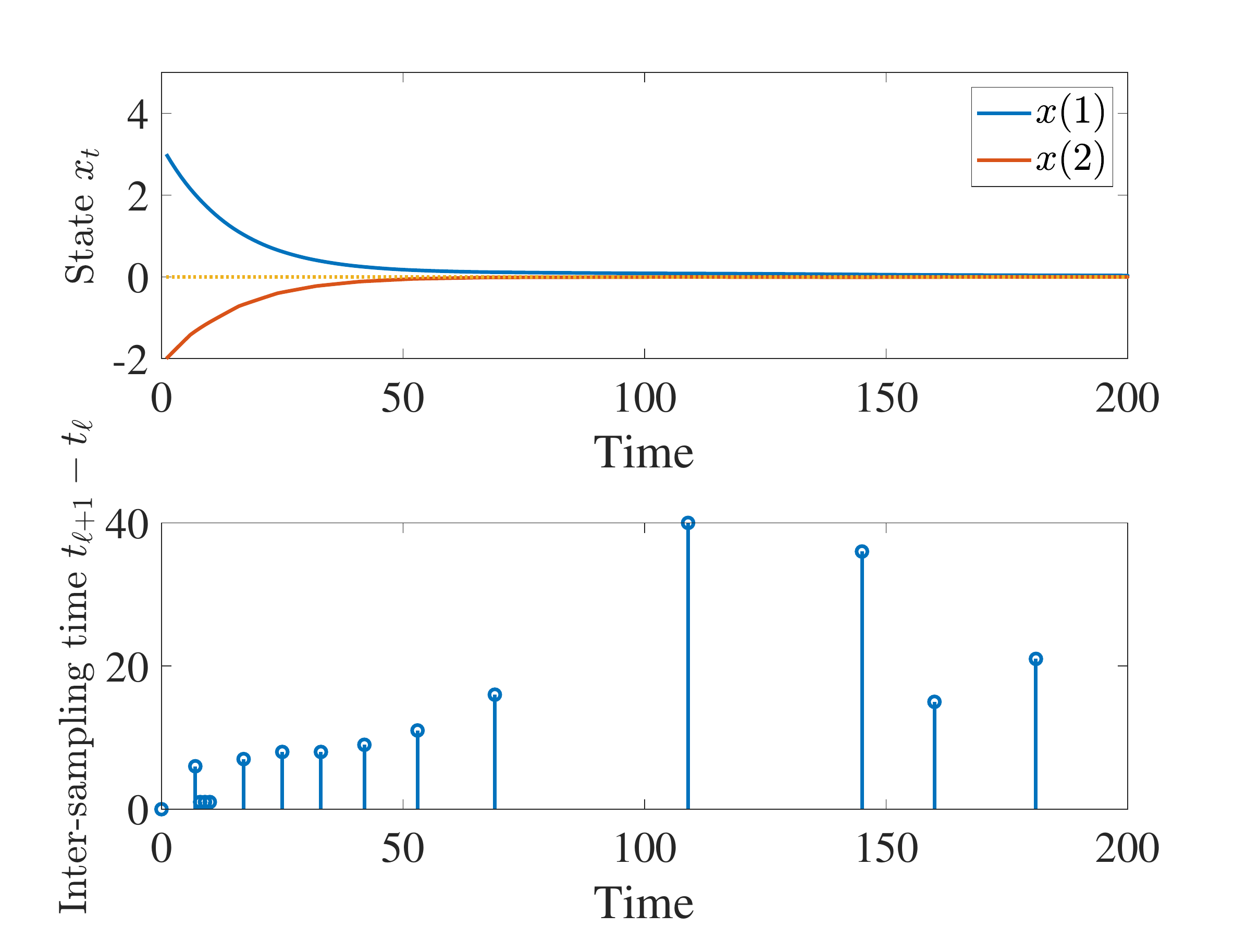}\\
		\caption{State trajectory of the order-$2$ system under \eqref{eq:statempc} with $\sigma = 0.27$.}\label{fig:sigma027}
		\centering
	\end{figure}

	\subsubsection{Example 2}
	Consider the inverted pendulum control problem in \cite{peng2013anovel}.
	The linearized system is given by
	\begin{align}
	\dot{x}(t) = \left[
	\begin{matrix}
	0 & 1 & 0 & 0\\
	0 &0 &\frac{m_1 g}{-m_2} & 0\\
	0 & 0 & 0 & 1\\
	0 & 0 & \frac{g}{\ell} & 0
	\end{matrix}
	\right] x(t) + \left[
	\begin{matrix}
	0\\
	\frac{1}{m_2}\\
	0\\
	\frac{-1}{m_2 \ell}
	\end{matrix}
	\right] u(t)
	\end{align}
	where $m_1 = 1$, $m_2 = 10$, $\ell = 3$, and $g = 10$.
	Let $\sigma = 0.27$, $\kappa = 0.1$, $\mu = 200$, and the initial state $x_0 = [0.98~0~0.2~0]^\top$.
	It can be seen from Fig. \ref{fig:sigma031_noise015_4order} that the system converges to zero with $62$ samples over $[0,200]$, which is less than that of \cite{Wang2021data}.
	\begin{figure}
		\centering
		\includegraphics[width=9cm]{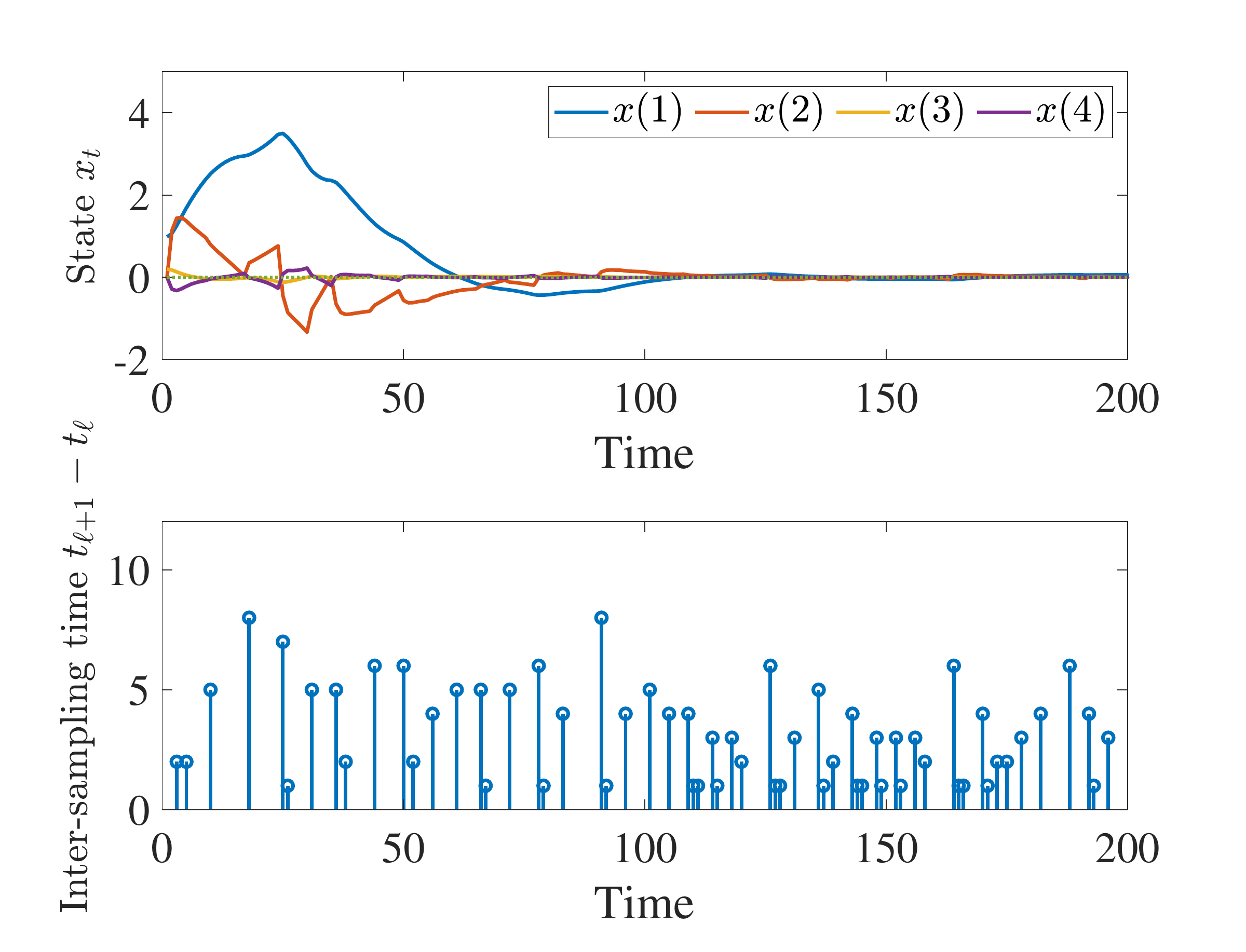}\\
		\caption{State trajectory of the order-$4$ system under \eqref{eq:statempc} with $\sigma = 0.3$.}\label{fig:sigma031_noise015_4order}
		\centering
	\end{figure}
	
	\section{Conclusions}\label{sec:conclusion}
	
	This paper puts forth a data-driven self-triggering control framework for unknown linear systems through trajectory prediction.
		For output feedback systems, a data-driven MPC scheme is developed, which generates a sequence of control inputs  once an event (transmission) is triggered. The predicted output trajectory from the MPC is further used to design a self-triggering law that is purely data-based.
	Both feasibility as well as practical exponential stability were established for the resulting data-driven self-triggered MPC. 
	Moreover, when a state feedback control law is considered, 
a norm maximization problem was developed to predict future system states, thus enabling the data-driven self-triggered control.
	Finally, numerical examples are
	presented to validate the practical merits of the proposed data-driven control methods and theory.

	\bibliographystyle{IEEEtran}
	
	\bibliography{bible3}

\end{document}